\newtheorem{theoremSI}{Theorem}
\newtheorem{corollarySI}{Corollary}[theoremSI]
\begin{document}

\title{The distance backbone of complex networks}

\shorttitle{Distance backbone of complex networks}

\shortauthorlist{Simas, Correia \& Rocha}

\author{%
\name{Tiago Simas,${}^{1}$ Rion Brattig Correia${}^{2,3,4}$ \& Luis M. Rocha${}^{2,3,5*}$}
\address{${}^{1}$Universidade Lusófona, Lisboa, Portugal\\
         ${}^{2}$Center for Social and Biomedical Complexity, Luddy School of Informatics, Computing \& Engineering, Indiana University, Bloomington IN, USA \\
         ${}^{3}$Instituto Gulbenkian de Ci\^{e}ncia, Oeiras, Portugal \\
         ${}^{4}$CAPES Foundation, Ministry of Education of Brazil, Bras\'{i}lia, DF, Brazil\\
         ${}^{5}$Department of Systems Science and Industrial Engineering, Binghamton University, Binghamton, NY 13902\email{$^*$Corresponding author: rocha@binghamton.edu}}}

\maketitle

\begin{abstract}
{
%
Redundancy needs more precise characterization as it is a major factor in 
the evolution and robustness of networks of multivariate interactions. We investigate the complexity of such interactions by inferring a connection transitivity that includes all possible measures of path length for weighted graphs. The result, without breaking the graph into smaller components, is a distance backbone subgraph sufficient to compute all shortest paths. This is important for understanding the dynamics of spread and communication phenomena in real-world networks. The general methodology we formally derive yields a principled graph reduction technique and provides a finer characterization of the triangular geometry of all edges---those that contribute to shortest paths and those that do not but are involved in other network phenomena. We demonstrate that the distance backbone is very small in large networks across domains ranging from air traffic to the human brain connectome, revealing that network robustness to attacks and failures seems to stem from surprisingly vast amounts of redundancy.}
{Complex Systems, Network Backbones, Redundancy, Shortest Paths}
\end{abstract}




\section{Introduction: Redundancy in the triangular organization of complex networks}
\label{section_intro}


Humans excel at the activity of associating objects and concepts.
Indeed, significant scientific advances have come from characterizing multivariate associations as \emph{complex networks} \cite{vespignani2018twenty}.
Examples include: interactions between suppliers and consumers in an electrical grid, friendships and trust relationships among people, correlations in gene regulation data, connections among neurons, and many others \cite{barabasi2016network}.
Several sophisticated mathematical methods have been used to model multivariate associations, including hypergraphs \cite{Klamt:2009kx}, relations \cite{Klir1995,Mordeson2000}, simplicial complexes \cite{salnikov2018simplicial}, and dynamical systems theory \cite{strogatz2014nonlinear}.
In network science, complex networks have been studied mostly using graph theory \cite{newman2011complex,vespignani2018twenty}.
Graphs are intuitive and algorithmically simpler than the alternatives and have been used to model the
Internet \cite{pastor_vesp}, the World Wide Web \cite{albert-2002-74},
collaboration networks \cite{barrat-2004-101,Newman2001},
biological networks \cite{Oltvai2002}, the human brain \cite{sporns2011networks}, and many other types of multivariate associations and interactions \cite{barabasi2016network}.

Most advances in network science have come from the study of patterns of connectivity (network structure) that has provided many insights into the organization of complex systems. Much remains to be understood, however, about how the structure of networks affects the dynamics and robustness of complex systems \cite{vespignani2018twenty,newman2011complex}. For instance, in human brain networks we do not know how synaptic connectivity leads to the dynamical patterns of functional connectivity responsible for behavior \cite{sporns2011networks}. In systems biology we know much about the connectivity patterns of gene and protein regulatory networks from existing models \cite{helikar2012cell}, however, we also know that the structure of interactions from these models is not sufficient to predict regulatory dynamics or derive control strategies that allow us to, for instance, revert a diseased cell to a healthy state \cite{gates2016control}. Similar issues arise with the large-scale collection of social behavior data from social media and mobile devices, which has sparked much additional interest in network science \cite{correia2020mining,pescosolido2016linking}. The structure of social interactions can help us understand aspects of health and disease, such as the spread of pandemics \cite{kraemer2020effect,wang2020response} and detection of drug interactions \cite{correia2016monitoring}, but understanding the dynamical processes of these networks is required for us to be able to predict and control biomedical phenomena.

We address the link between structure and dynamics by exploring important \emph{patterns of redundancy} that contribute to how structure affects dynamics in networks.
Redundancy is thought to be a major factor in the evolution of complex systems \cite{conrad1990geometry} but a precise characterization of how it affects complex network dynamics is still lacking.
A full understanding of the interplay between network structure and dynamics requires a study of multivariate dynamics \cite{gates2016control} and its redundancy \cite{gates2021EfGraph}.
However, most often we do not possess enough time-resolved data or computational power to precisely characterize the multivariate dynamics of large networks.
In these cases, network structure is still very useful to understand the dynamics of spread and communication phenomena, which can be inferred from shortest paths between variables. Therefore, we focus on transitivity in network connectivity, which leads to much redundancy in the computation of shortest paths.

We formally develop a general methodology to infer transitivity that includes all possible measures of distance and path length for \emph{weighted graphs} and show using real world examples that their \emph{triangular organization} induces substantial redundancy in shortest-path computation in networks across many domains.
Such redundancy is very large in networks ranging from air traffic to the human brain connectome, which makes transport and communication on these networks robust to attacks and failures.
The general methodology provides a finer characterization of the behavior of shortest paths on networks than existing measures in network science, such as betweeness centrality, do.
The methodology also reveals that the edges that do not participate in the shortest paths of real-world networks tend to vary widely in how much they distort a given triangular geometry, which suggest a complex robustness mechanism.

Weighted graphs, where every edge is denoted with a positive real number, are often used to capture distance associations between linked nodes within a set of node variables. These networks are useful as knowledge graphs for big data inference to, for example, infer drug interactions from social media and electronic health records using shortest-path calculations \cite{correia2020mining,correia2019city,correia2016monitoring} or automate fact-checking using the Wikipedia knowledge graph \cite{ciampaglia2015computational}.
We have shown that such distance graphs obtained from real-world data are typically not metric, but rather \emph{semi-metric} \cite{simas_rocha_2014_MWS}, in that the triangle inequality of metric spaces is not observed for every edge in the graph. That is, the shortest distance between at least two nodes in the graph is not the direct edge between them but rather an indirect path via other nodes. 

Mathematically there are infinite ways to compute shortest-paths on these distance graphs, each isomorphic to a particular transitive closure \cite{simas_rocha_2014_MWS}. For instance, computing the most typical shortest path measure (e.g., via Dijkstra's algorithm), where the path length is the sum of the constituent edge weights (distances), is isomorphic to the particular transitive closure we refer to as the \emph{metric closure}. It enforces the triangle inequality on the closed graph thus: if an edge in the original graph is semi-metric, its weight is replaced by the length of the shortest indirect path between the nodes it links \cite{simas_rocha_2014_MWS}.
This generalizes to all forms of computing path length, whereby some distance edges obey a generalized triangle inequality---those we refer to as \emph{triangular} edges---but  many others do not---those we refer to as \emph{semi-triangular} edges.
Interestingly, the triangular edges constitute an \emph{invariant sub-graph} of the original graph that does not change with the computation of a given transitive closure computation and is sufficient to compute all shortest paths. 

We refer to this subgraph as the \emph{distance backbone} of a complex network (conceptualized as a weighted distance graph). The amount of redundancy in the network is defined by the size of the backbone subgraph in relation to the size of the original graph. Edges not on the distance backbone are superfluous in the computation of shortest paths, as well as in all network measures derived from shortest paths (e.g., efficiency, and betweenness centrality).
Moreover, because distance backbones preserve all shortest-path connectivity, we show that they contain all network bridges and do not break networks into constituent components as other network reduction and backbone techniques (e.g., thresholding) do.

We show that there is typically massive redundancy in graphs obtained from various types of data ranging from topical spaces of large document corpora to brain networks. For instance, the knowledge graph of more than 3 million concepts extracted from Wikipedia that has been used for automated fact-checking is 98\% semi-metric \cite{ciampaglia2015computational}. This means that its metric backbone contains only 2\% of the original edges and those are sufficient to compute all shortest-paths of the original graph used to infer factual associations. In social contact networks, the metric backbone is typically between 8\% and 30\%. Moreover, the relative size of the backbone in human brain (fMRI) networks has been shown useful in distinguishing healthy cohorts from autistic, depressive, and psychotic ones \cite{simas2015semi,simas2016commentary}.

Finally, we show that while semi-triangular edges are redundant in computing shortest paths, they distort the triangular geometry of the resulting distance backbone to varying degrees. Therefore, the amount of redundancy in a graph \textit{together} with the distribution of this topological distortion provide a nuanced characterization of all edges and their importance for robustness of shortest paths in a network. Altogether, our distance backbone analysis contributes to a better understanding of information transmission in complex networks.
Following, we introduce the necessary mathematical background.


\section{Background: Closure of Proximity and Distance Graphs}
\label{section_back}

A graph $P(X)$ is a binary relation of set $X$ with itself that characterizes the network of interactions amongst its $n=|X|$ variables $x$.
The nodes (or vertices) of $P$ are each of the variables $x$, and the edges (or links) between two variables $x_i$ and $x_j$ are denoted by $p_{ij}$.
In the simplest case, $p \in \{ 0, 1 \}$, the variables are either related (1) or not (0).

To characterize network interactions in a more natural way, it is very often important to consider intensity \cite{barrat-2004-101,barrat_berth_vesp_book_2008,Newman2001_PREII,Barrat2004PRL,Wang2005a,Goh2005} by allowing edges to be weighted. Weights can denote a \emph{proximity} (also known as similarity or strength) or a \emph{distance} (also known as dissimilarity) between nodes.
Proximity is proportional and distance is inversely proportional to the intensity of the interaction.
In the case of proximity, without loss of generality, edge weights can be normalized to the unit interval:  $p
\in [0, 1]$ \cite{Klir1995,simas_rocha_2014_MWS}.
Thus \emph{proximity graphs} $P(X)$ can be represented by
\emph{adjacency matrices} $P$ of size $n \times n$,  where entries denote the edge weights $p_{ij} = P[i,j], \forall x_i,x_j \in X$.
%
%
When $P(X)$ is an undirected graph, $P$ is a reflexive and symmetric matrix  ($p_{ii}=1 \, \wedge \, p_{ij}=p_{ji}, \forall x_i,x_j \in X$).
%

\subsection{Transitive closure}
\label{section_TC}


Transitivity is an important concept in complex networks because it allows for the inference of indirect associations from data \cite{louch2000personal, burda2004network}. The transitive closure in proximity graphs and (indirect) shortest paths in distance graphs are isomorphic means to infer the propensity of variables that do not interact directly to affect one another indirectly via network interactions.
Because many forms of transitivity can be defined, there are many distinct ways to compute such indirect interactions in networks.

Weighted graphs $P(X)$ obtained from real-world data using various measures of proximity (e.g., co-occurrence, correlation, mutual information) are typically not transitive. That is, at least one pair of nodes, $x_i$ and $x_j$, are linked more strongly via a third node, $x_k$, than they are directly.
%
%
%
To study the transitivity of a weighted graph, we need to compute the strength of interaction between any two nodes given all possible indirect paths between them.
There are, however, infinite ways to numerically integrate the weights on the indirect paths.
Menger \cite{Menger} first generalized \emph{transitivity criteria} in the context of probabilistic metric spaces by introducing \emph{triangular norm} (t-norm) binary operations.

Later, Zadeh used t-norms to generalize logical operations in multi-valued logics such as Fuzzy logic \cite{Zadeh1965}, as follows.
A \emph{t-norm} $\land:[0,1] \times [0,1] \rightarrow [0,1]$, is a binary operation with the properties of commutativity ($a \land b = b \land a$), associativity ($a \land (b \land c) = (a \land b) \land c$), monotonicity ($a \land b \leq c \land  d$ iff $a\leq c$ and $b\leq d$), and identity element $1$ ($ a \land 1 = a$).
In other words, the algebraic structure $([0,1], \land)$ is a monoid \cite{Gondran2007}, and $\land$ generalizes \emph{conjunction} in logic to deal with truth values in the unit interval ($a,b \in [0,1]$) (for a thorough review, see \cite{Klir1995}).
Similarly, a \emph{t-conorm} $\lor$ generalizes \emph{disjunction} and  has the same properties as a t-norm, but its identity element is $0$ ($a \lor 0 = a$) \cite{Klir1995}.
To obtain \emph{dual} t-norm/t-conorm pairs, we can derive a t-conorm from a t-norm via a generalization of De Morgan's laws: $a \lor b = 1 - ((1-a) \land (1-b))$.


To compute the transitivity of a graph $P(X)$, we use the \emph{composition of binary relations} based on the algebraic structure $([0,1], \land, \lor )$
via the logical composition of the graph's adjacency
matrix with itself ($P \circ P$). The calculation procedure is similar to a matrix product, except that summation and multiplication are replaced by the t-conorm and t-norm, respectively \cite{Klir1995,Klement2004}:

$$P \circ P = \bigvee_{k}
\bigwedge (p_{ik},p_{kj})=p'_{ij}.$$

The most commonly used generalized disjunction (t-conorm) and conjunction (t-norm) pair is:  $\langle \lor \equiv$ \texttt{maximum}$, \land \equiv$ \texttt{minimum} $\rangle$.
Thus, the standard graph composition is referred to as the
\emph{max-min composition}:

 $$P \circ P = \max_{k}
\min (p_{ik},p_{kj})=p'_{ij}.$$


The \emph{transitive closure} $P^T(X)$ of a graph $P(X)$ can then be defined as:

\begin{equation}
\label{eq_TC1}
P^{T} = \bigcup_{\eta=1}^{\kappa} P^{\eta},
\end{equation}

\noindent where $P^{\eta} = P \circ P^{\eta - 1}$, for $\eta = 2, 3, ...$, and $P^1 = P$ \cite{Klir1995}.
Furthermore, $P \cup Q$ denotes the union of two graphs defined on same node set $X$ and is defined by the disjunction of their respective adjacency matrix entries: $p_{ij} \lor q_{ij},\,\, \forall_{x_i,x_j \in X}$, where $\lor$ denotes the same t-conorm used in the composition.
In the most general case, $\kappa \rightarrow \infty$ \cite{Gondran2007}, but
when $\vee = \texttt{maximum}$, with any t-norm $\wedge$,
we have $\kappa \leq |X|-1$
\cite{Klir1995}.
In other words, in the latter case the transitive closure converges in finite time and is easily computed via formula \ref{eq_TC1} \cite{simas_rocha_2014_MWS}.

Because each t-conorm/t-norm pair employed defines a different graph composition, different criteria for transitivity can be computed \cite{Menger,Klir1995,simas_rocha_2014_MWS}.
A graph $P(X)$ is transitive for a t-conorm/t-norm pair $\langle \lor, \land \rangle$ if and only if:

 $$p_{ij} \geq \bigvee_{k}
\bigwedge (p_{ik},p_{kj}), \forall_{x_i, x_j, x_k \in X.}$$

\subsection{Distance closure isomorphism}
\label{section_isomorphism}


%
The study of complex networks, including such phenomena as community structure \cite{fortunato_review}, node and edge centrality \cite{zuo2012network}, and link prediction \cite{martinez2016survey}, is based heavily on a notion isomorphic to transitive closure of a proximity graph $P(X)$:
%
%
computation of all shortest paths of a \textit{distance graph} $D(X)$, typically via the Dijkstra algorithm \cite{dijkstra}. In this type of weighted graph, the adjacency matrix is denoted by $D$, edge weights $d_{ij} \in [0,\infty]$ denote the intuitive notion of distance and are anti-reflexive ($d_{ii}=0$) and symmetric ($d_{ij}=d_{ji}$)\footnote{Here we only consider the standard definition of distance as a symmetric and anti-reflexive function \cite{galvin_shore91}. 
The methodology, however, is extendable to non-symmetric graphs (see \cite{simas_rocha_2014_MWS} for details).}.

The two types of graphs and their closures can be defined as isomorphic via a non-linear (strictly monotonic decreasing) map $\varphi$, since proximity edges are constrained to $[0,1]$ while distance edges are in $[0,+\infty]$ \cite{simas_rocha_2014_MWS}.
The isomorphism between proximity graphs $P(X)$ and distance graphs $D(X)$ must observe the following constraint $\forall x_i,x_j,x_k \in X$:

\begin{equation}
\label{eq_isomorphism}
\mathop{f}\limits_{k} \{g(\varphi (p_{ik}),\varphi (p_{kj})\}=\varphi(\mathop{\lor}\limits_{k}\{ \land (p_{ik},p_{kj})\}),
\end{equation}


\noindent where $\langle f,g \rangle$ is a pair of binary operations such that $f,g:[0,+\infty]\times [0,+\infty]\rightarrow [0,+\infty]$.
These operations are the t-conorm and t-norm isomorphic counterparts in distance graphs, so they also form monoids $([0,+\infty],f)$ and $([0,+\infty],g)$, each with properties of commutativity, associativity, monotonicity, and identity element ($+\infty$ for $f$ and 0 for $g$). Thus, $f$ and $g$ have been named a td-conorm and td-norm, respectively \cite{simas_rocha_2014_MWS}, where td is ``triangular distance.''
Though an infinite number of maps satisfy the isomorphism the simplest, which we use here unless otherwise noted, is the familiar distance function:

\begin{equation}
\label{eq_iso_map}
\varphi: d_{ij} = 1/p_{ij} - 1.
\end{equation}

Constraint \ref{eq_isomorphism} also leads to the equations that define each operation in terms of its isomorphic counterpart, where $\varphi^{-1}$ is the inverse function of $\varphi$:

\begin{eqnarray}
\label{eq_isomorphism_formulae}
g(d_{ik},d_{kj}) &=& \varphi(\land(\varphi^{-1}(d_{ik}),\varphi^{-1}(d_{kj}))) \nonumber \\
f(d_{ik},d_{kj}) &=& \varphi (\lor( \varphi^{-1}(d_{ik}),\varphi^{-1}(d_{kj})))\nonumber \\
\lor(p_{ik},p_{kj}) &=& \varphi^{-1}(f(\varphi(p_{ik}),\varphi(p_{kj}))) \nonumber \\
\land(p_{ik},p_{kj}) &=& \varphi^{-1}(g(\varphi(p_{ik}),\varphi(p_{kj}))).
\end{eqnarray}


As depicted in Figure \ref{fig:isomorphism-shortest-path}A, given a (strictly monotonic decreasing) map $\varphi$, the transitive closure $P^T(X)$ of graph $P(X)$ possesses an isomorphic \emph{distance closure} $D^T(X)$ of graph $D(X)$, and vice-versa.
This means that every transitive closure criterion established by a $\langle \land , \lor \rangle$-composition for proximity graphs yields an isomorphic distance closure criterion for distance graphs established by an $\langle f,g \rangle$-composition \cite{simas_rocha_2014_MWS}.
Therefore, there are infinite ways to algebraically compute indirect associations between nodes in a distance graph and shortest path computation is just one possibility.
In the range of all possibilities, many are meaningful and useful for complex network applications, including diffusion distances and the class of shortest-path distance closures we focus on below \cite{simas_rocha_2014_MWS}.

\begin{figure}[h]
    \centering
    \includegraphics[width=\textwidth]{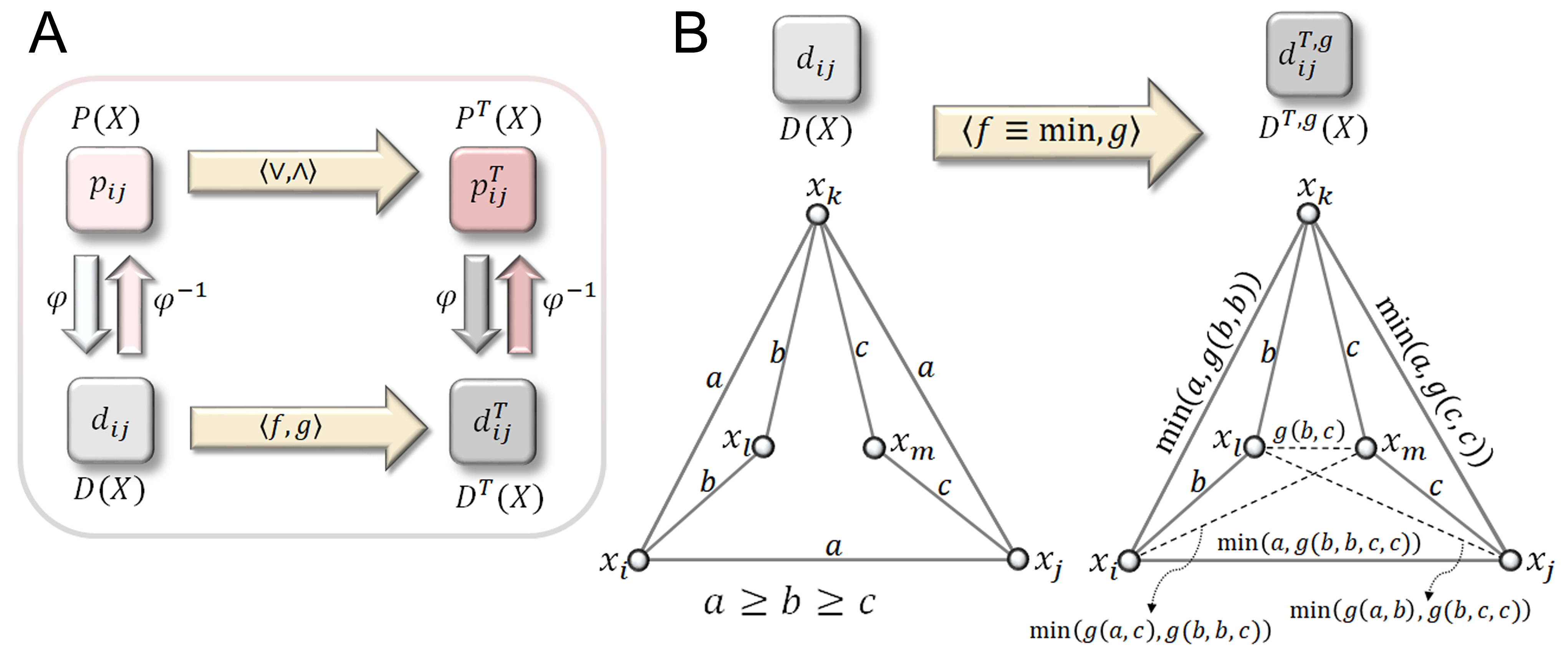}
    \caption{
        \textbf{General Isomorphism and Shortest-Path Distance Closure.} 
        (\textbf{A})
        Proximity and distance graphs with their respective transitive and distance closures.
        (\textbf{B})
        Schematic of the transformation of a distance graph $D(X)$ and its adjacency matrix with (symmetric) entries $d_{ij}$ into a distance closure graph $D^{T,g}(X)$ and its adjacency matrix with (symmetric) entries $d^{T,g}_{ij}$. This class of closure fixes $f \equiv \min$ so that the path with shortest length between each pair of nodes $x_i$ and $x_j$ is selected, while $g$ can be any td-norm and defines how path length is computed (eq. \ref{eq_length}).
        On the left, an example distance graph with 5 nodes and edge weights given by constraint $a\geq b \geq c$. On the right, its shortest-path distance closure;
        edges that do not exist in $D(X)$ are shown as dashed in $D^{T,g}(X)$.
    }
    \label{fig:isomorphism-shortest-path}
\end{figure}

\subsection{Shortest-path distance closures}
\label{section_shortest_path_DC}



%


The most common distance closure is the \emph{metric closure}, where in our formulation,
%
%
$f(z,w)=\min(z,w)$ and $g(z,w)=z+w$, for $z,w \in [0, +\infty]$.
%
This type of closure computes the \emph{shortest path} between all nodes $x$ in a distance graph $D(X)$: operation $g \equiv +$ is used to compute path length by summing the edge distance weights of each path, and operation $f\equiv \min $ is used to select the shortest path length\footnote{Using the simple map of eq. \ref{eq_iso_map}, the isomorphic counterpart of metric closure in proximity graphs becomes the transitive closure based on
$\lor(a,b) = \max(a,b)$ and $\land(a,b) = ab/(a+b-ab)$ for $a,b \in [0,1]$ 
\cite{simas_rocha_2014_MWS}.}.
The metric closure is equivalent to the \emph{All Pairs Shortest Paths} (APSP) problem \cite{Zwick}. It is typically computed via the  Dijkstra algorithm \cite{Brandes} but it can also be computed algebraically with the matrix composition of distance closure or the isomorphic transitive closure (formula \ref{eq_TC1}), also known as the \emph{distance product} \cite{Zwick}.
Once the metric closure of distance graph $D(X)$ is computed the resulting graph $D^{T,m}(X)$ is guaranteed to be metric: Every edge weight obeys the triangle inequality or $d^{T,m}_{ij} \leq d^{T,m}_{ik} + d^{T,m}_{kj}, \forall x_i,x_j,x_k \in X$. This means that the shortest distance between any two nodes in $D^{T,m}(X)$ is the direct edge weight that links them and no indirect path adds up to a shorter distance.

Another noteworthy distance closure is the isomorphic counterpart of the max-min transitive closure of proximity graphs, $\langle \lor, \land \rangle \equiv \langle \max, \min \rangle$.
For any strictly monotonic decreasing map $\varphi$, this transitive closure  is equivalent to the \emph{ultra-metric closure}, $D^{T,u}(X)$, of distance graph $D(X)$, defined by operations $\langle f, g \rangle \equiv \langle \min, \max \rangle$.
Instead of the triangle inequality, this closure enforces a stronger inequality: $d^{T,u}_{ij} \leq \max (d^{T,u}_{ik}, d^{T,u}_{kj}), \forall x_i,x_j,x_k \in X$ \cite{simas_rocha_2014_MWS}.
In this case, instead of computing path length by summing the edges in a path (as in the metric closure), path length is the ``weakest link'' in the path, either the largest distance edge-weight or the smallest proximity edge-weight.
In other words, the shortest path between any two nodes in the original distance graph is computed as the \emph{minimax path} \cite{Camerini197810}, which leads to an ultra-metric distortion of the original graph topology \cite{Rammal_ultrametricity}.

The metric and ultra-metric closures are special cases of the class of \emph{shortest-path distance closures} \cite{simas_rocha_2014_MWS}
%
in which the t-conorm is fixed to $\lor \equiv \max$ in the transitive closure of proximity graphs, or equivalently, the td-conorm is fixed to $f \equiv \min$ in the distance closure of distance graphs.
This ensures that the closure converges in finite time and is easily computed via eq. \ref{eq_TC1} \cite{Klir1995,simas_rocha_2014_MWS}.
While there is an infinite set of such closures, because $f \equiv \min$, every possible closure results from choosing the shortest path between every pair of nodes. What changes is how one computes the \emph{length} of a path between $x_i$ and $x_j$ via $m$ indirect (non-repeating) nodes:

\begin{equation}
\label{eq_length}
\ell_{ij}= g(d_{ik_1},d_{k_1k_2},\ldots,d_{k_m j}).
\end{equation}

\noindent The canonical metric closure computes path length as the sum ($g \equiv +$) of all edges in the path.
The ultra-metric closure computes path length by selecting the weakest link  ($g \equiv \max $).

As an illustration, imagine we are interested in computing influence on a social network where nodes are people and edge weights denote social distance: Small edge-weights indicate people are socially near (thus, they can influence each other a lot), and large edge weights indicate they are socially distant. We can compute how indirectly-connected people may influence one another by computing a distance closure. The metric closure, in effect, accounts for each edge on an indirect path between two people in the network as an additive cost: It sums the distances edge-weights on the path to measure indirect social influence.
The ultra-metric closure, on the other hand, assumes that indirect social influence depends only on the weakest link on the path between two people: The largest distance edge-weight on the path is the measure of indirect social influence.

Beyond these two well-known closures, we can compute path length in infinite other ways by setting $g$ in eq. \ref{eq_length} to any td-norm that obeys the isomorphism constraint of eq. \ref{eq_isomorphism}.
For instance, path length could be computed by the Euclidean distance, $g \equiv \left(d_{ik}^2 + d_{kj}^2 \right )^{1/2}$, a more general Minkowski metric, $g \equiv \left({d_{ik}^r + d_{kj}^r} \right )^{1/r}$ \cite{schvaneveldt1990pathfinder} or even an operation that does not sum edge weight contributions, like a product, $g \equiv (d_{ik} + 1) \cdot (d_{kj} +1) -1$. Indeed, we can use the isomorphic counterpart of any of the many families of known t-norms \cite{Klir1995}.
%
%
Thus, there are also infinite ways to compute indirect associations in complex networks via the \emph{general shortest-path distance closure}, which we denote by $D^{T,g}(X)$; see Figure \ref{fig:isomorphism-shortest-path}B for an example.
For a given td-norm $g$, the \emph{generalized average shortest-path length} (as defined by eq. \ref{eq_length}) is denoted by $\langle d^{T,g}_{ij}\rangle \forall x_i,x_i \in X$, that is, by the mean value of all entries of the adjacency  matrix of distance closure graph $D^{T,g}$\footnote{This assumes a connected graph; otherwise, one ignores the infinite entries of the adjacency matrix.}.

The study of desirable axiomatic characteristics of distance closures---including several specific cases in addition to metric and ultra-metric closures, such as diffusion distance closures---has been pursued by authors Simas and Rocha\cite{simas_rocha_2014_MWS}. Here we focus on an invariant subgraph of all shortest-path distance closures (those with $f \equiv \min$), which allows us to uncover structural \emph{redundancy} and shortest-path \emph{robustness} in graph models of real-world complex networks.

\begin{figure*}[ht!]
    \centering
    \includegraphics[width=\textwidth]{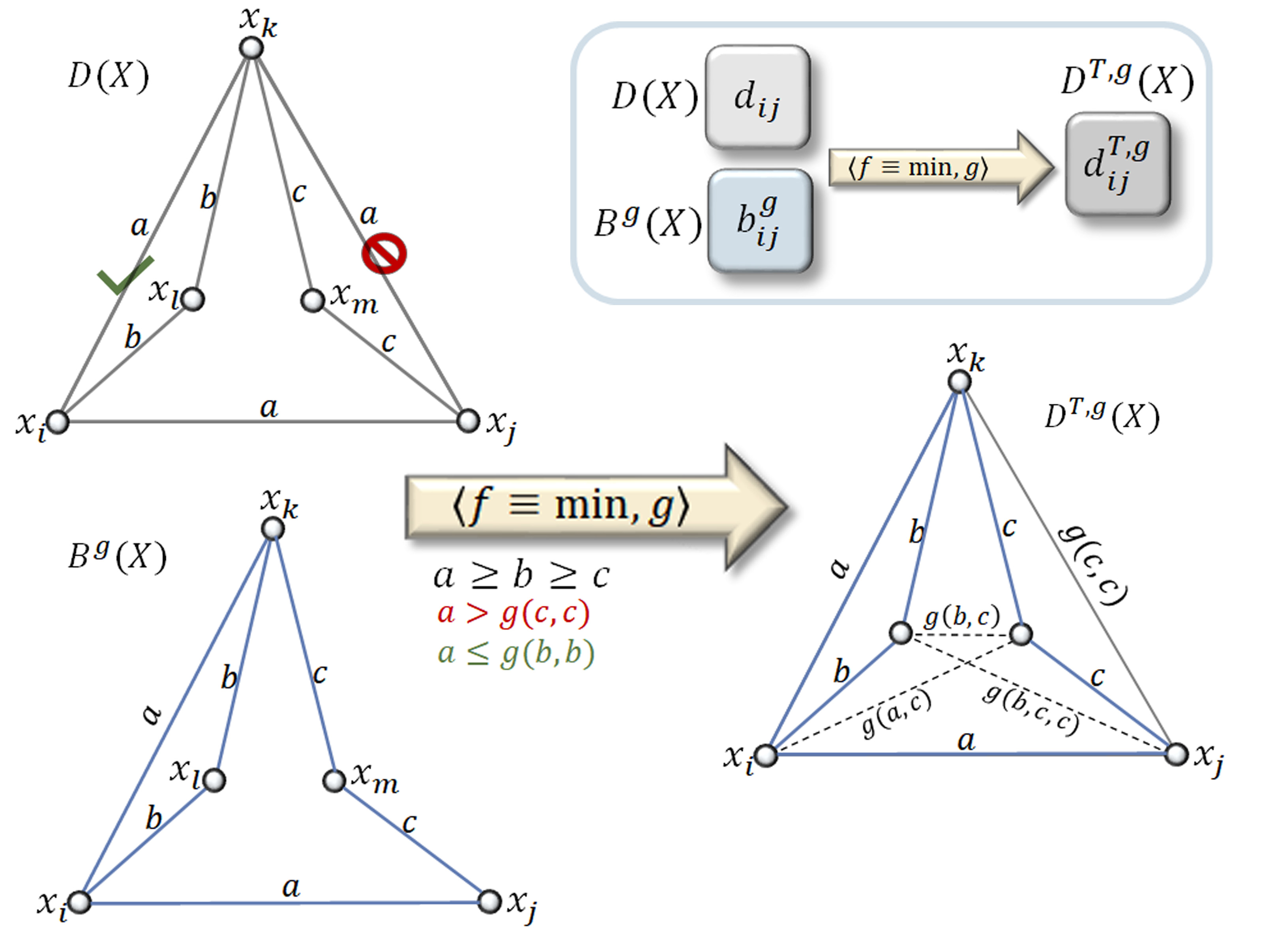}
    \caption{
        \textbf{Distance Backbone.} Top, right: Schematic of the shortest-path distance closure $D^{T,g}(X)$ obtained from either the original distance graph $D(X)$ or its distance backbone $B^g(X)$. Top, left: Example distance graph of 5 nodes with edge distance weights constrained by $a\geq b \geq c$ and by $a > g(c,c)$ and $a \leq g(b,b)$ which, respectively, break and enforce the generalized triangle inequality (eq. \ref{eq_gen_triang}) for nodes $\{x_j, x_k, x_m\}$ and $\{x_i, x_k, x_l\}$. Bottom: The distance backbone graph $B^g(X)$ (left) and the distance closure graph $D^{T,g}(X)$ (right) for any td-norm $g$ given the edge weight constraints considered and td-norm properties; backbone (triangular) edges in blue, semi-triangular edges in gray, and (indirect) edges that do not exist in $D(X)$ appear in dashed gray in $D^{T,g}(X)$.
    }
    \label{fig:distance_backbone}
\end{figure*}

\section{Results}
\label{section_results}

\subsection{Distance backbone: the general case}
\label{section_distance_backbone}


The \emph{distance backbone} of a distance graph $D(X)$ is defined as its invariant subgraph
%
%
%
$B^g(X)$
in the computation of a (shortest-path) distance closure $D^{T,g}(X)$ using the binary operation pair $\langle f \equiv \min, g \rangle$.
The edge weights of the distance backbone graph are given by:

\begin{equation}
\label{eq_backbone}
b^g_{ij} = \begin{cases} d_{ij},& \mbox{if } d_{ij}=d^{T,g}_{ij} \\ +\infty,& \mbox{if } d_{ij} > d^{T,g}_{ij} \end{cases}\, , \forall x_i,x_j \in X,
\end{equation}

\noindent where $b^g_{ij} = +\infty $ means that there is no direct edge between $x_i$ and $x_j$ in the distance backbone graph---the direct distance is infinite.
The edge weights of $D(X)$ that do not change after computation of the distance closure $D^{T,g}(X)$ are those that obey the \emph{generalized triangle inequality} imposed by $\langle f \equiv \min, g \rangle$:

\begin{equation}
\label{eq_gen_triang}
d_{ij} \leq g (d^T_{ik}, d^T_{kj}), \forall x_i,x_j,x_k \in X.
\end{equation}

\noindent The edges weights that do become smaller with the distance closure, and only these, break this inequality in $D(X)$ and are not included in the distance backbone.
When an edge $d_{ij}$ of $D(X)$ breaks the generalized triangle inequality, it means that the \emph{length} of at least one indirect path between $x_i$ and $x_j$, as computed by eq. \ref{eq_length}, is shorter than the direct distance: $\ell_{ij} = d^{T,g}_{ij} < d_{ij}$.
%
Figure \ref{fig:distance_backbone} depicts the process of computing the distance backbone for any td-norm $g$ and an example (two additional examples are shown in Figures \ref{fig:distance_backbone_2} and  \ref{fig:distance_backbone_3} in SI).

It is only in weighted distance graphs---where weights discriminate and characterize degree of association between nodes as distance---that the concept of distance backbone is meaningful and useful, as summarized by the following theorem (proof in SI):

\begin{theorem} [Backbone of non-weighted graphs]
\label{th_crisp_backbone}
If $D(X)$ is a standard, non-weighted graph, then its distance backbone for any td-norm  $g$ is the entire graph: $B^g(X) \equiv D(X).$
\end{theorem}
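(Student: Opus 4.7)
The plan is to use the td-norm axioms of identity and monotonicity, together with the fact that a non-weighted graph takes only a single finite value on its edges, to show that no indirect path can be strictly shorter than the direct edge between any two adjacent nodes. By eq. \ref{eq_backbone}, this forces every edge of $D(X)$ to be retained in the backbone. First I would translate "non-weighted" into the distance formulation: every edge of $D(X)$ carries a common distance $d_{ij} = c$ for some $c \geq 0$ (with $c = 0$ under the canonical isomorphism of eq. \ref{eq_iso_map} applied to $p \in \{0,1\}$), while every non-edge carries $d_{ij} = +\infty$. Two items then need checking: (a) every existing edge $(i,j)$ satisfies $d_{ij} = d^{T,g}_{ij}$, so it is retained in $B^g(X)$ with its original weight; and (b) every non-edge yields $b^g_{ij} = +\infty$. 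Item (b) is immediate, because both clauses of eq. \ref{eq_backbone} assign $b^g_{ij} = +\infty$ whenever $d_{ij} = +\infty$.

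The main work is item (a): bounding every indirect path length from below by $c$. Fix an edge $(i,j)$ and any alternative path $x_i \to x_{k_1} \to \cdots \to x_{k_m} \to x_j$. If any hop along the path is absent from $D(X)$, monotonicity of $g$ applied to eq. \ref{eq_length} drives $\ell_{ij}$ to $+\infty$, which cannot beat the direct distance $c$. Otherwise every hop has weight $c$, and $\ell_{ij} = g(c, c, \ldots, c)$. Using the identity property $g(a, 0) = a$ together with monotonicity, I obtain $g(c, c) \geq g(c, 0) = c$; iterating this step along the path gives $g(c, \ldots, c) \geq c$ for any finite arity. Combined with the trivial upper bound supplied by the direct edge itself viewed as a length-one path, I get $d^{T,g}_{ij} = c = d_{ij}$, as required.

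I do not foresee any serious obstacle beyond careful bookkeeping. The only mildly delicate case is $c = 0$, which arises under eq. \ref{eq_iso_map} and makes the inequality $g(c, \ldots, c) \geq c$ degenerate to equality; this is still covered by the first clause of eq. \ref{eq_backbone}, which asks only for $d_{ij} = d^{T,g}_{ij}$, not for strict superiority over indirect paths. The argument therefore holds uniformly for every td-norm $g$ admitted by the framework of Section \ref{section_isomorphism}. The non-weighted hypothesis is \emph{essential}: as soon as two edges carry different finite distances, one edge may be bypassed by a composite path built from smaller-weight edges, which is precisely the semi-triangular phenomenon that makes the backbone a genuinely non-trivial subgraph in the weighted setting.
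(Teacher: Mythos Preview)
Your proof is correct and rests on the same td-norm axioms as the paper's, but you take a slightly more general route. The paper commits immediately to the convention that in a non-weighted graph every existing edge has $d_{ij}=0$ (the isomorphic image of $p_{ij}=1$ under eq.~\ref{eq_iso_map}); since $0$ is the minimum of the range $[0,+\infty]$ and the identity of every td-norm, the generalized triangle inequality $0 \le g(d^T_{ik},d^T_{kj})$ is then vacuously satisfied, and the proof is one line. You instead allow an arbitrary common edge weight $c\ge 0$ and derive $g(c,\ldots,c)\ge c$ from monotonicity together with the identity $g(c,0)=c$, treating $c=0$ as a special case. Your version therefore proves a slightly stronger statement---the backbone is the whole graph whenever all finite edge weights coincide---at the cost of an extra inductive step; the paper's version is shorter because it exploits that the specific constant in its framework happens to be the identity element.
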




%
Edges of $D(X)$ that obey the generalized triangle inequality are called \emph{triangular}, and those that do not are called \emph{semi-triangular}, analogous to semimetrics that relax the standard triangle inequality \cite{galvin_shore91}. 
Interestingly, semi-triangular edges are not necessary to compute the distance closure $D^{T,g}(X)$ as they cannot appear in a (generalized) shortest path. Thus, triangular edges alone define the backbone and are sufficient to compute the closure, as the following theorem attests (proof in SI)\footnote{
Naturally, there is an isomorphic \emph{transitive backbone} of proximity graph $P(X)$ via eq.  \ref{eq_isomorphism}. Edges in the transitive backbone are transitive according to the criterion established by $\langle \vee \equiv \max, \wedge \rangle$. Edges not on the backbone break this transitivity criterion, and we can refer to them as \emph{semi-transitive}; see also transitive reduction in
\S \ref{section_discussion}.}:

\begin{theorem}[Backbone Sufficiency]
\label{th_backbone_is_sufficient}
Given a distance graph $D(X)$ defined on (node) variable set $X$, its shortest-path distance closure defined by any td-norm $g$ is equivalent to the same closure of its distance backbone subgraph:
$D^{T,g}(X) \equiv B^{T,g} (X)$.
\end{theorem}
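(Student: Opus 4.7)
The plan is to establish $D^{T,g}(X) \equiv B^{T,g}(X)$ by proving $d^{T,g}_{ij} = b^{T,g}_{ij}$ for every pair, with $b^{T,g}_{ij}$ denoting the $(i,j)$ entry of the closure of the backbone. The inequality $d^{T,g}_{ij} \leq b^{T,g}_{ij}$ is immediate: $B^g(X)$ arises from $D(X)$ by raising the weights of semi-triangular edges to $+\infty$ (eq.\ \ref{eq_backbone}), so every finite-length path in $B^g(X)$ is also a path in $D(X)$ with the same $g$-length, and the shortest path in the backbone cannot be shorter than the shortest in $D$. The substance of the proof is the opposite direction.

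For that I would prove a single \emph{rerouting lemma}: if $(x_u, x_v)$ is a semi-triangular edge of $D(X)$, the graph $D'$ obtained by raising $d_{uv}$ to $+\infty$ has the same shortest-path distance closure as $D(X)$. Fix any simple shortest path $\pi$ from $x_i$ to $x_j$ in $D(X)$; if $\pi$ avoids $(x_u, x_v)$ there is nothing to do, otherwise decompose $\pi = \pi_1 \cdot (x_u, x_v) \cdot \pi_2$. Semi-triangularity supplies a path $\pi'$ from $x_u$ to $x_v$ with $\ell(\pi') = d^{T,g}_{uv} < d_{uv}$; and because the only simple path that uses the direct edge $(x_u, x_v)$ is that one-edge path itself, $\pi'$ must route through other intermediates, so $\pi'$ actually lives inside $D'$. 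Splicing $\pi'$ into $\pi$ yields a walk from $x_i$ to $x_j$ in $D'$ whose $g$-length, by coordinatewise monotonicity of $g$, is at most $\ell(\pi) = d^{T,g}_{ij}$. Pruning any cycles from this walk does not increase its $g$-length, because monotonicity together with the identity $g(\cdot, 0) = \cdot$ implies that dropping a non-negative term from a $g$-combination only decreases the result. The resulting simple path gives $d^{T,g}_{ij}(D') \leq d^{T,g}_{ij}(D)$; the reverse inequality is immediate from $D' \subseteq D$, so the lemma follows.

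I would then finish by induction on the number of semi-triangular edges of $D(X)$, which is finite. The lemma shows that deleting one semi-triangular edge preserves the \emph{entire} distance closure, and therefore preserves the triangular/semi-triangular status of every other edge; so the pool of ``still semi-triangular'' edges shrinks by exactly one at each step, and no new ones are created. Exhausting them yields precisely $B^g(X)$ by eq.\ \ref{eq_backbone}, while the distance closure is still $D^{T,g}(X)$, so $B^{T,g}(X) \equiv D^{T,g}(X)$.

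The main subtlety I anticipate is the absence of \emph{strict} monotonicity of $g$ for several natural td-norms, most notably $g = \max$ in the ultra-metric case, where splicing in a strictly shorter sub-path can produce a walk of the same length as the original shortest path rather than a strictly shorter one. A strategy that tries to derive a strict-inequality contradiction from a semi-triangular edge appearing on some chosen shortest path would break down here. Framing the rerouting lemma as an equality of closures (not as a strict improvement) and relying only on the weak monotonicity and identity-element axioms guaranteed for every td-norm by eq.\ \ref{eq_isomorphism} is what allows the argument to cover every shortest-path distance closure uniformly.
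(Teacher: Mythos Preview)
Your proof is correct and rests on the same core maneuver as the paper's: whenever a shortest path would traverse a semi-triangular edge $(x_u,x_v)$, splice in the strictly shorter indirect $u$--$v$ route guaranteed by semi-triangularity and appeal to monotonicity of $g$ to conclude the spliced walk is no longer. The paper's own proof (in the SI) delivers this in a single informal paragraph, simply asserting that a semi-triangular edge ``is not used to compute $d^T_{ij}$, nor the length of any shortest path that goes through $x_i$ and $x_j$,'' and stops there. Your rerouting-lemma-plus-induction packaging is more careful on two points the paper glosses over: (i) you explicitly handle the possibility that the replacement sub-path $\pi'$ might itself contain \emph{other} semi-triangular edges, by deleting them one at a time and checking that the remaining semi-triangular set is unchanged; and (ii) you correctly flag that for td-norms like $g=\max$ the splice may yield only a weak inequality, so a proof-by-contradiction that tries to exhibit a \emph{strictly} shorter path would fail, whereas your equality-of-closures framing needs only weak monotonicity and the identity axiom. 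These are genuine refinements of the same strategy rather than a different route.
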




From theorem \ref{th_backbone_is_sufficient}, it follows that if the original graph $D(X)$ is connected, then the distance backbone graph is also connected for any td-norm $g$ and it must contain all bridge edges (proofs in SI).

\begin{corollary} [Backbone Connectivity]
\label{th_backbone_is_connected}
Given a connected distance graph $D(X)$, its distance backbone graph $B^g(X)$ is also a connected graph for any td-norm $g$.
\end{corollary}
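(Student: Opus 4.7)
The plan is to derive the corollary as an almost immediate consequence of Theorem \ref{th_backbone_is_sufficient}. Since $D(X)$ is connected, for every pair $x_i,x_j \in X$ there exists at least one path of finite length in $D(X)$, so the generalized shortest-path length satisfies $d^{T,g}_{ij} < +\infty$. By Theorem \ref{th_backbone_is_sufficient}, $D^{T,g}(X) \equiv B^{T,g}(X)$, which gives $b^{T,g}_{ij} = d^{T,g}_{ij} < +\infty$. Hence there is a path in the backbone subgraph $B^g(X)$ of finite generalized length between every pair of nodes, and the corollary will follow once we argue that a finite generalized length forces all edges of that realizing path to be true (finite-weight) backbone edges rather than the $+\infty$ placeholders introduced by eq. \ref{eq_backbone}.

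The one subtle step is therefore to establish that $+\infty$ is absorbing for every td-norm, i.e., $g(d,+\infty) = +\infty$ for all $d \in [0,+\infty]$. I would obtain this via the isomorphism relations in eq. \ref{eq_isomorphism_formulae}: under the map $\varphi$, the value $+\infty$ on the distance side corresponds to $0$ on the proximity side (the identity of the t-conorm $\vee$), and for any t-norm $\wedge$ the value $0$ is absorbing, since monotonicity combined with the identity $a\wedge 1 = a$ yields $a \wedge 0 \leq 1 \wedge 0 = 0$. Transporting this back through $\varphi$ gives the required $g(d,+\infty)=+\infty$. Consequently, any walk in $B^g(X)$ that traverses an edge assigned the placeholder weight $+\infty$ has generalized length $+\infty$, and so a walk realizing the finite value $b^{T,g}_{ij}$ must use only genuine backbone edges, exhibiting $x_i$ and $x_j$ as connected in $B^g(X)$.

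Putting these ingredients together, the proof reduces to a short contradiction: assume $B^g(X)$ is disconnected and let $x_i,x_j$ lie in different components; then every walk from $x_i$ to $x_j$ in $B^g(X)$ contains at least one $+\infty$ edge, so by the absorption property $b^{T,g}_{ij} = +\infty$, contradicting $b^{T,g}_{ij} = d^{T,g}_{ij} < +\infty$ obtained above. The only real work is verifying the absorption lemma for general td-norms; the rest is bookkeeping around the convention in eq. \ref{eq_backbone} that non-backbone directions carry weight $+\infty$, and I do not anticipate any further obstacle. A brief subsequent remark can note that the same argument, applied to each pair of endpoints of a bridge, shows that every bridge of $D(X)$ must appear in $B^g(X)$, which is how the companion statement about bridges in the preceding paragraph can be handled in parallel.
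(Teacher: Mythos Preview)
Your proposal is correct and follows the same route as the paper: use Theorem~\ref{th_backbone_is_sufficient} to conclude $B^{T,g}(X)=D^{T,g}(X)$ is complete (all entries finite) and infer connectivity of $B^g(X)$ from that. The paper's proof is terser and leaves implicit the step you spell out carefully---that a finite closure value forces a realizing path consisting solely of finite-weight backbone edges, which in turn rests on $+\infty$ being absorbing for every td-norm; your explicit verification of this absorption via the isomorphism with t-norms is a genuine detail the paper glosses over, but it does not constitute a different approach.
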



\begin{corollary} [Backbone Contains All Bridges]
\label{th_backbone_bridges}
Given a distance graph $D(X)$, all its bridge edges are included in its distance backbone graph $B^g(X)$ for any td-norm $g$.
\end{corollary}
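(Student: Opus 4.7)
The plan is to prove the corollary by contradiction, leveraging Theorem~\ref{th_backbone_is_sufficient} (Backbone Sufficiency). Let $e=(x_i,x_j)$ be an arbitrary bridge edge of $D(X)$, with finite weight $d_{ij}$, and suppose for the sake of contradiction that $e\notin B^g(X)$ for some td-norm $g$.

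First, I would observe that, by the definition of the distance backbone in eq.~\ref{eq_backbone}, $B^g(X)$ is a spanning subgraph of $D(X)$ obtained by discarding exclusively semi-triangular edges. Hence if $e\notin B^g(X)$ then $B^g(X)$ is a subgraph of $D(X)\setminus\{e\}$. Second, since $e$ is a bridge, removing it disconnects $x_i$ from $x_j$ in $D(X)$, so $x_i$ and $x_j$ lie in distinct connected components of $D(X)\setminus\{e\}$, and therefore of $B^g(X)$ as well. Consequently no (simple) path from $x_i$ to $x_j$ exists in $B^g(X)$, and the shortest-path distance closure of the backbone satisfies $b^{T,g}_{ij}=+\infty$. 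On the other hand, in $D(X)$ the direct edge $e$ is already a valid simple path, so $d^{T,g}_{ij}\leq d_{ij}<+\infty$. This contradicts Theorem~\ref{th_backbone_is_sufficient}, which asserts $D^{T,g}(X)\equiv B^{T,g}(X)$. We conclude $e\in B^g(X)$, and since $e$ was an arbitrary bridge, all bridges lie in the backbone.

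I expect the main obstacle to be bookkeeping rather than conceptual: one must justify that the absence of any simple path from $x_i$ to $x_j$ in $B^g(X)$ indeed yields $b^{T,g}_{ij}=+\infty$. This follows because the td-conorm $f\equiv\min$ has identity element $+\infty$, so the aggregation of eq.~\ref{eq_length} over the empty set of paths collapses to $+\infty$. An equally short, direct alternative avoids even invoking Theorem~\ref{th_backbone_is_sufficient}: any simple path from $x_i$ to $x_j$ that traverses the edge $e$ must consist of $e$ alone, since any additional vertex would force a repetition of $x_i$ or $x_j$; the bridge property then means $e$ is the \emph{only} simple path between its endpoints, so every td-norm $g$ yields $d^{T,g}_{ij}=d_{ij}$ and hence $b^g_{ij}=d_{ij}$ by eq.~\ref{eq_backbone}. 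Either route closes the proof in a few lines.
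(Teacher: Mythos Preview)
Your main argument is correct and matches the paper's own proof: both proceed by contradiction, observing that if a bridge were absent from $B^g(X)$ then Theorem~\ref{th_backbone_is_sufficient} would fail (the paper additionally name-checks Corollary~\ref{th_backbone_is_connected}, but your explicit computation $b^{T,g}_{ij}=+\infty\neq d^{T,g}_{ij}$ makes that reference unnecessary). Your alternative direct route---noting that a bridge is the unique simple path between its endpoints, so $d^{T,g}_{ij}=d_{ij}$ immediately---is a clean bonus not present in the paper.
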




These are very useful results because many graph reduction techniques, such as thresholding and other backbones, do not necessarily preserve graph connectivity or bridge edges
(see \S \ref{section_discussion}).
%
Figure \ref{fig:distance_backbone} exemplifies the backbone sufficiency theorem and connectivity corollary for any td-norm $g$ (as do Figures \ref{fig:distance_backbone_2} and  \ref{fig:distance_backbone_3}  in SI).

\begin{figure*}[ht!]
    \centering
    \includegraphics[width=\textwidth]{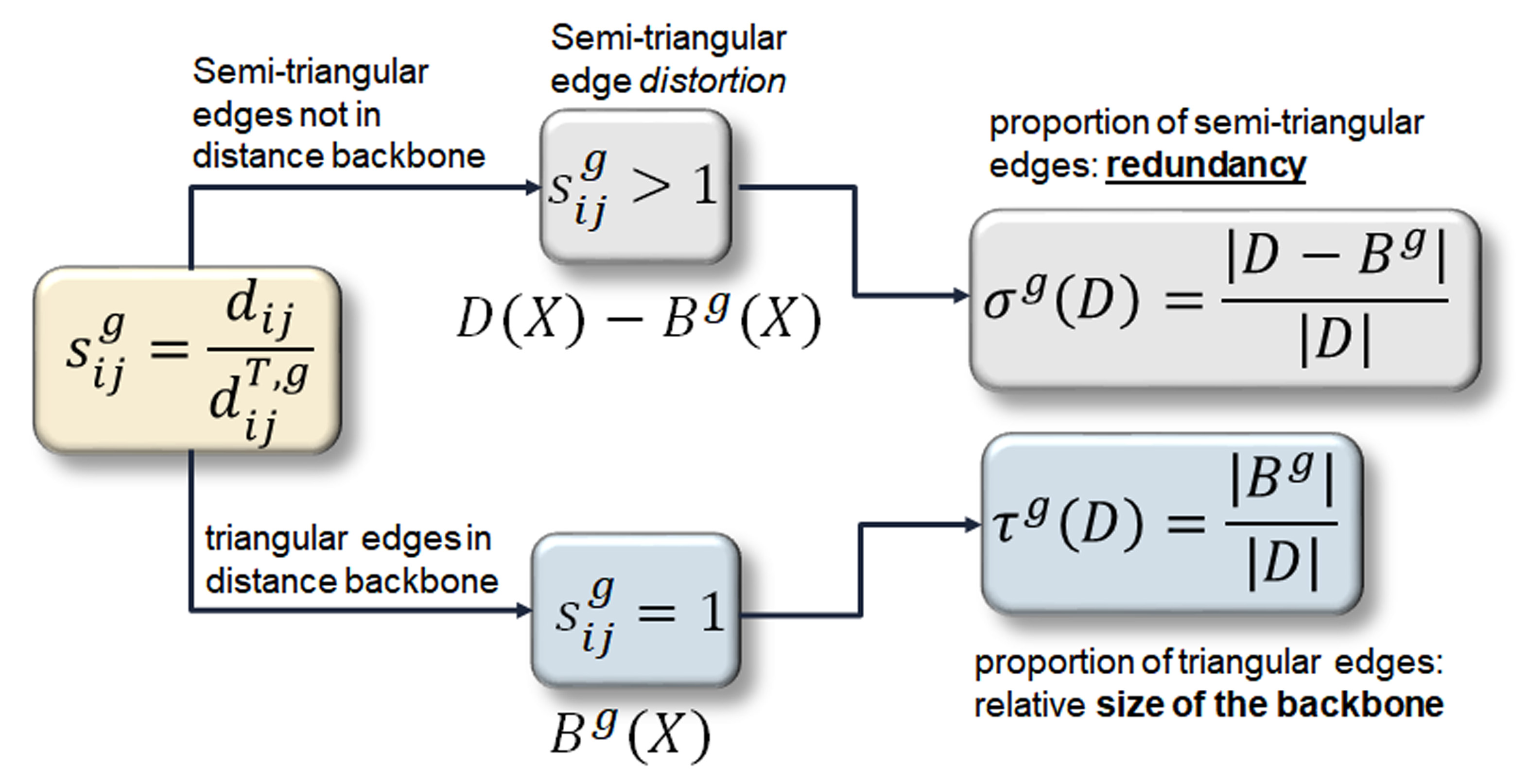}
    \caption{
        \textbf{Semi-triangular measures.}
        Parsing of the two types of distance graph edges, triangular ($s^g_{ij}=1$) and semi-triangular ($s^g_{ij} > 1$), from which graph-level measures of backbone size ($\tau^g(D)$) and redundancy ($\sigma^g(D)$) derive, respectively.
        The measures apply to all distance backbones that derive from a shortest-path distance closure with any td-norm $g$.
    }
    \label{fig:measures_triangular}
\end{figure*}

The \emph{proportion of semi-triangular edges} is therefore the proportion of edges in graph $D(X)$ that are not necessary to compute shortest-paths according to the distance closure defined by $\langle f \equiv \min, g \rangle$. This measure of \emph{edge redundancy} is given by:

\begin{equation}
    \label{eq_percentage_red}
    \sigma^g (D) = \frac{| \{ d_{ij}: d_{ij} > d^{T,g}_{ij} \} |}{| \{ d_{ij} \} |} , \forall_{x_i,x_j \in X: i > j}.
\end{equation}

Similarly, the \emph{proportion of triangular edges} in graph $D(X)$ is simply the \emph{relative size of its distance backbone} $B^g(X)$:

\begin{equation}
    \label{eq_percentage_backbone}
    %
    \tau^g (D) = \frac{| \{ d_{ij}: d_{ij} = d^{T,g}_{ij} \} |}{| \{ d_{ij} \} |} = \frac{| \{ b^g_{ij} \} |}{| \{ d_{ij} \} |} , \forall_{x_i,x_j \in X: i > j}.
\end{equation}

\noindent It follows that $\tau^g = 1 - \sigma^g$. Since distance graphs are symmetric ($d_{ij} = d_{ji}$), and edges are nondirected, in formulae \ref{eq_percentage_red} and \ref{eq_percentage_backbone} we count each edge only once and do not tally reflexive edges $d_{ii}$. That is, we tally only the lower diagonal entries of the adjacency matrix:  $d_{i,j}: i > j$.

Operation $g$ instantiates a specific length measure for indirect paths on a distance graph $D(X)$ that results in a specific shortest-path distance closure $D^{T,g}(X)$.
Each such closure induces a topological \emph{distortion} \cite{simas_rocha_2014_MWS} of the original graph obtained from multivariate associations observed in real-world data, whereby semi-triangular edges are forced to conform to the respective triangle inequality given by eq. \ref{eq_gen_triang}.
However, only the semi-triangular edges get distorted; the triangular edges and the distance backbone they compose remain invariant.
Therefore, $\sigma^g$ also denotes the proportion of edges topologically distorted by a distance closure, whereas $\tau^g$ denotes the proportion of topologically invariant edges.

A measure of \emph{semi-triangular edge distortion} is easily obtained via a ratio of the direct distance over the shortest indirect path length between nodes  $x_i$ and $x_j$:

\begin{equation}
    \label{eq_edge_distortion}
    s^g_{ij} =\frac{d_{ij}}{d^{T,g}_{ij}}, \forall_{x_i,x_j \in X: i \neq j}.
\end{equation}

\noindent If an edge $d_{ij}$ is triangular, $s^g_{ij}=1$, meaning there is no distortion. If an edge is semi-triangular, $s^g_{ij}>1$, and the larger the ratio, the more the edge breaks the general triangle inequality and, thus, the more distorted it will be in the distance closure.
Figure \ref{fig:measures_triangular} depicts how measures \ref{eq_percentage_red}-\ref{eq_edge_distortion} relate to triangular and semi-triangular edges.

\begin{figure*}[ht!]
    \centering
    \includegraphics[width=\textwidth]{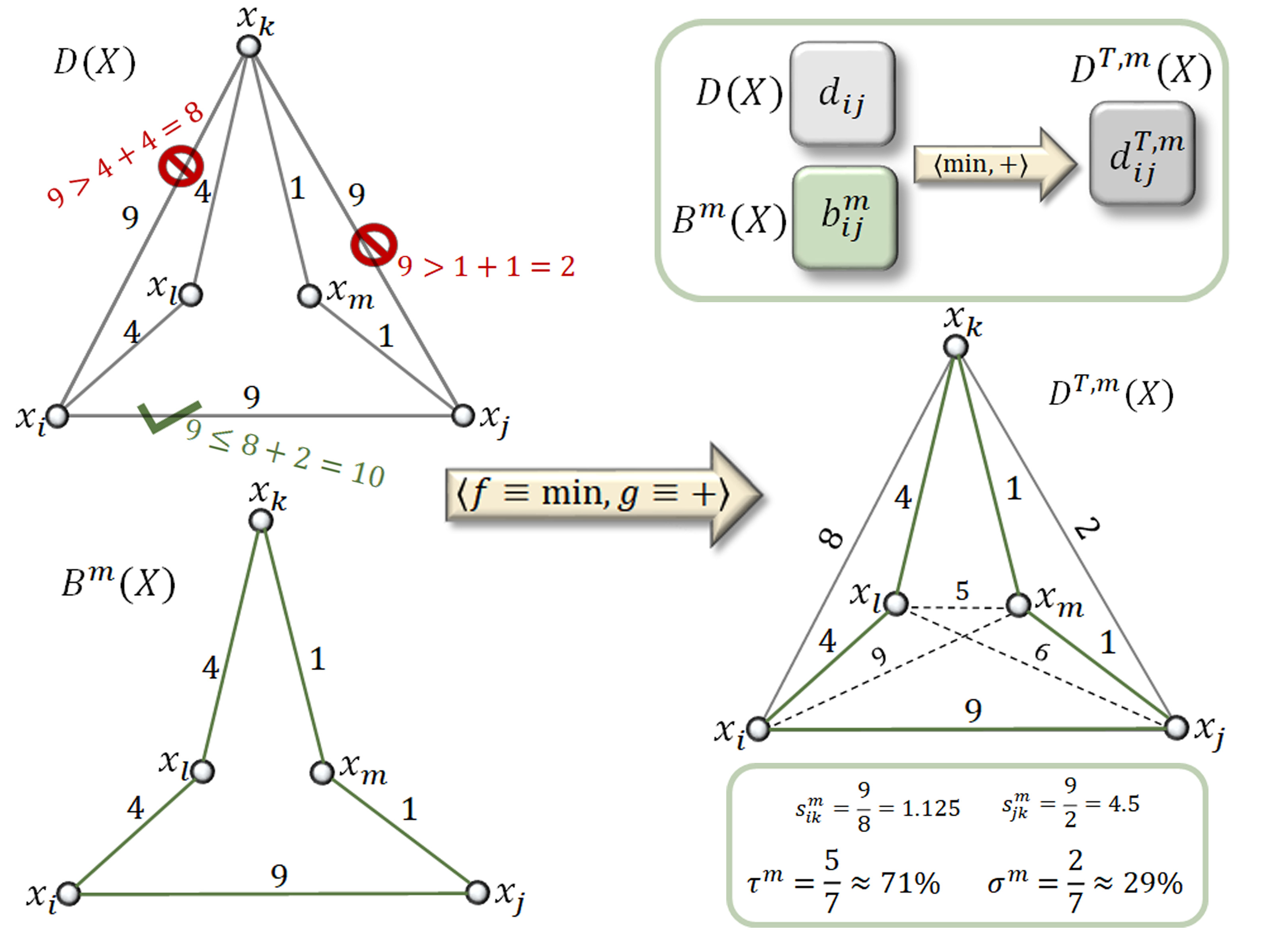}
    \caption{
        \textbf{Metric Backbone.}
        Top, right: Schematic showing metric closure $D^{T,m}(X)$ is obtained equivalently from either the original distance graph $D(X)$ or its metric backbone $B^m(X)$.
        Top, left: An example distance graph with 5 nodes and edge distance weights; edges $d_{ik}=9$ and $d_{jk}=9$ break the triangle inequality and are annotated with the computation of their indirect (shortest path) distances $d^{T,m}_{ik}=8$ and $d^{T,m}_{jk}=2$; edge $d_{ij}=9$ does not break the triangle inequality ($d^{T,m}_{ij}=10$).
        Bottom, left: The metric backbone graph $B^m(X)$.
        Bottom, right: The metric closure graph $D^{T,m}(X)$ with metric backbone edges in green, semi-metric edges in gray, and indirect edges (that do not exist in $D(X)$) in dashed gray. 
        Bottom, box: Measures of semi-metric edge distortion for the two edges that break the triangle inequality; backbone size and edge redundancy for graph $D(X)$.
    }
    \label{fig:metric_backbone}
\end{figure*}

\subsection{Metric backbone}
\label{section_metric_backbone}


The general distance backbone (eq. \ref{eq_backbone}) is based on the generalized triangle inequality (eq. \ref{eq_gen_triang}). However, most research on complex networks is based on the metric geometry given by the standard triangle inequality, obtained by setting $g \equiv +$ in (eq. \ref{eq_gen_triang}):

\begin{equation}
    \label{eq_triang_metric}
    d_{ij} \leq d_{ik} + d_{kj}, \forall x_i,x_j,x_k \in X.
\end{equation}
In this case, the distance closure becomes the metric closure $D^{T,m}(X)$, used to compute standard shortest paths on distance graphs. Path length is computed by summing edge weights ($g \equiv +$ in eq. \ref{eq_length}), typically using Dijkstra's algorithm in the APSP 
(\S \ref{}).

Via the metric closure, we obtain a \emph{metric backbone} $B^{m}(X)$ by setting $g \equiv +$ in (eq. \ref{eq_backbone}).
This backbone contains all the edges of the original distance graph $D(X)$ that are metric, those that satisfy the standard triangle inequality (eq. \ref{eq_triang_metric});
edges not on the backbone are \emph{semi-metric} \cite{Rocha2002,Rocha2005,Simas2012}.
%
Figure \ref{fig:metric_backbone} shows the process of computing the metric backbone based on the general case of Figure \ref{fig:distance_backbone} and an example.
Notice that since the original example graph is connected (Top, left), its backbone is also a connected graph (Bottom, left) and the metric closure is a complete graph (Bottom, right), which exemplifies Corollary \ref{th_backbone_is_connected} for $g \equiv +$.

As shown above for the general distance backbone,
semi-metric edges do not contribute to shortest-path computation and are thus redundant for that purpose (Theorem \ref{th_backbone_is_sufficient}).
This is seen in the example in Figure \ref{fig:metric_backbone} where edges $d_{ik}$ and $d_{jk}$ are semi-metric and do not contribute to any shortest-path computed for the metric closure $D^{T,m}(X)$.
Accordingly, by setting $g \equiv +$ in formulae \ref{eq_percentage_red} and \ref{eq_percentage_backbone} we obtain the \emph{semi-metric edge redundancy} and the \emph{relative size of the metric backbone} of graph $D(X)$, denoted by $\sigma^m (D)$ and $\tau^m (D)$---also known as the proportions of semi-metric and metric edges, respectively.
The distance graph in Figure \ref{fig:metric_backbone} is thus $\sigma^m (D) \approx 29\%$ redundant for shortest-path calculation with a metric backbone $B^m(X)$ that comprises $\tau^m (D) \approx 71\%$ of the original graph $D(X)$.
A measure of \emph{semi-metric edge distortion}, $s^m_{ij}$, is similarly obtained by setting $g \equiv +$ in eq. \ref{eq_edge_distortion}. If an edge $d_{ij}$ is metric, $s^m_{ij}=1$, meaning there is no distortion. If the edge is semi-metric, $s^m_{ij}>1$, and the larger the ratio, the more the edge breaks the triangle inequality (eq. \ref{eq_triang_metric}) and the more distorted it will be in the metric closure.
Figure \ref{fig:measures_metric} in SI depicts how $s^m_{ij}$ relates to the semi-metric redundancy and size of backbone measures.

\begin{figure*}[ht!]
    \centering
    \includegraphics[width=\textwidth]{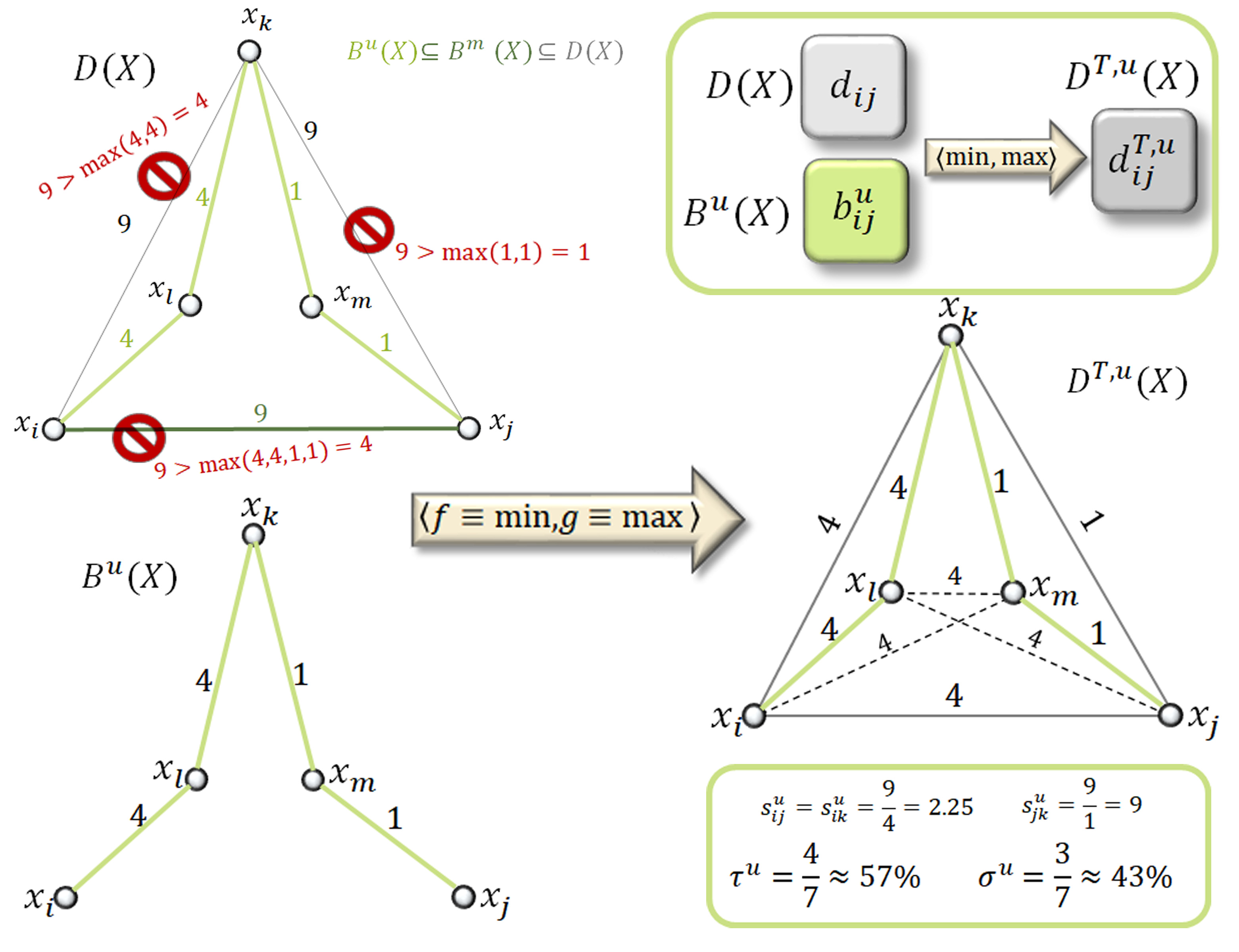}
    \caption{
        \textbf{Ultra-metric Backbone.}
        Top, right: Schematic showing ultra-metric closure $D^{T,u}(X)$ is obtained equivalently from either the original distance graph $D(X)$ or its ultra-metric backbone $B^u(X)$.
        Top, left: An example distance graph with 5 nodes and edge distance weights; edges $d_{ik}=d_{jk}=d_{ij}=9$ break the ultra-metric triangle inequality and are annotated with the computation of their indirect, shortest minimax path length ($g \equiv \max$): $d^{T,u}_{ik}=d^{T,m}_{ij}=4$ and $d^{T,m}_{jk}=1$; ultra-metric edges in light green, semi-ultra-metric edges in gray, and metric edge (in metric backbone but not in ultra-metric backbone) in dark green.
        Bottom, left: The ultra-metric backbone graph $B^u(X)$. Bottom, right: The ultra-metric closure graph $D^{T,u}(X)$ (right); indirect edges that do not exist in $D(X)$ appear in dashed gray. Bottom, box: Measures of semi-ultra-metric edge distortion for the three highlighted edges, as well as ultra-metric backbone size and edge redundancy of graph $D(X)$ shown at the bottom.
    }
    \label{fig:ultra_backbone}
\end{figure*}

\subsection{Ultra-metric backbone}
\label{section_ultrametric_backbone}

Given the family of shortest-path distance closures, other specific cases of distance backbones are meaningful.
For instance, with the ultra-metric closure $D^{T,u}(X)$, we compute path length as the weakest edge in the path by using $g \equiv \max$, also known as the \emph{minimax path} \cite{Camerini197810}
(\S \ref{section_isomorphism}).
%
This means that the closure now enforces the stronger \emph{ultra-metric triangle inequality}, $d_{ij} \leq \max (d_{ik},d_{kj})$. By setting $g \equiv \max$ in eqs. \ref{eq_backbone} and \ref{eq_gen_triang}, we obtain the \emph{ultra-metric backbone} $B^u(X)$, which contains all the edges of the original distance graph $D(X)$ that are ultra-metric.
Figure \ref{fig:ultra_backbone} shows the process of computing the ultra-metric backbone on the same example graph as Figure \ref{fig:metric_backbone}.

Edges that are not in the ultra-metric backbone may be referred to as \emph{semi-ultra-metric}, or simply semi-triangular as in the general case.
Via formulae \ref{eq_percentage_red} and \ref{eq_percentage_backbone} with $g \equiv \max$, we obtain the \emph{proportion of semi-ultra-metric edges} and the \emph{proportion of ultra-metric edges} in graph $D(X)$, denoted by $\sigma^u (D)$ and $\tau^u (D)$, respectively.
The distance graph in Figure \ref{fig:ultra_backbone} is thus $\sigma^u (D) \approx 43\%$ redundant, for the purpose of shortest minimax-path calculation, with an ultra-metric backbone $B^u(X)$ that comprises $\tau^u (D) \approx 57\%$ of the original graph $D(X)$.
A measure of \emph{semi-ultra-metric edge distortion}, $s^u_{ij}$, is similarly obtained via eq. \ref{eq_edge_distortion}.
%
%
%
Figure \ref{fig:measures_ultrametric} in SI depicts how $s^u_{ij}$ relates to ultra-metric redundancy and size of backbone measures.

Since the ultra-metric triangle inequality enforces a stronger transitivity criterion, edges not on the ultra-metric backbone may still be metric and thus included in the metric backbone  $B^m(X)$.
For instance, in Figure \ref{fig:ultra_backbone}, edge $d_{ij}$ is not on the ultra-metric backbone but is a metric edge.
The ultra-metric backbone is thus a subgraph of the metric backbone: $B^u (X) \subseteq B^m (X) \subseteq D(x)$.
Conversely, by Theorem \ref{th_backbone_is_sufficient}, the ultra-metric backbone is sufficient to compute the ultra-metric closure (all semi-ultra-metric edges are redundant for this purpose) but not to compute the metric closure.
%
%

\subsection{Other backbones}
\label{section_other_backbones}

The metric and ultra-metric backbones are based on well-known triangle inequalities, with the ultra-metric based on the strongest triangle inequality in the family of shortest-path distance closures \cite{simas_rocha_2014_MWS}.
Other criteria for triangle inequality can, however, be defined by setting $g$ to other binary operators in eq. \ref{eq_gen_triang} and in all formulae \ref{eq_length} to \ref{eq_edge_distortion}.
As discussed in the
\S \ref{section_isomorphism},
$g$ can yield other well-known path length measures such as the Euclidean distance and the more general Minkowski metric, which lead to a \emph{Euclidean backbone} and the family of \emph{Minkowski backbones} associated with pathfinder networks \cite{schvaneveldt1990pathfinder}
(see \S \ref{section_related_concepts}).

We can go much beyond such familiar distances by exploring the space of t-norms $\land$ in the isomorphism of formulae \ref{eq_isomorphism} and \ref{eq_isomorphism_formulae}. These generalized logical conjunctions are well known and many families exist to be explored by the complex networks field \cite{Klir1995,simas_rocha_2014_MWS}.
This space, naturally, includes all the familiar distances. For instance,
the isomorphic counterpart $g_{+} \equiv d_{ik} + d_{kj}$ used to compute the metric backbone is the well-known Hamacher t-norm \cite{hamacher78}: $\land (p_{ik},p_{kj}) = p_{ik} \cdot p_{kj}/(p_{ik}+p_{kj}-p_{ik}\cdot p_{kj})$ for proximity weights $p_{ik},p_{kj} \in [0,1]$. It is a special case of the Dombi family of t-norms \cite{dombi1982general} that is isomorphic to the Minkowski metric family in our framework \cite{simas_rocha_2014_MWS}.

Interestingly, all of the above use distances that sum the contribution of each distance edge weight (or the powers of them) on a path, but we can consider others that do not\footnote{The ultra-metric $g_{\max} \equiv \max (d_{ik},d_{kj})$ is also an example of a distance that does not sum edge weights to compute the path length, even though it can be approximated by the Minkowski metric when $r \rightarrow +\infty$.}.
For example, under our isomorphism the product t-norm $\wedge_{\times} \equiv (p_{ik} \cdot p_{kj})$ leads to a td-norm also based on product: $g_{\times} \equiv (d_{ik}+1) \cdot (d_{kj}+1) -1$. Such an operation under shortest-path distance closure ($f \equiv \min$) yields a \emph{product backbone} that considers length to be proportional to the product of the edge weights on a path\footnote{Because $0$ is the identity element for td-norms, one cannot consider a more direct distance product such as $g \equiv (d_{ik} \cdot d_{kj})$ for path length. However, as defined, $g_{\times}$ preserves this desirable property for computing path length: $g_{\times} (d_{ik}, d_{kj}) = d_{ik}$ if $d_{kj} = 0$  and $g_{\times} (d_{ik}, d_{kj}) = d_{kj}$ if $d_{ik} = 0$. In other words, if an edge has $0$ distance, it does not affect the length of a path in which it appears.}.

Naturally, based on application, we can consider many other algebraic families of t-norms to define distance backbones.
For instance, a t-norm that computes the hyperbolic product of proximity weights yields isomorphic distance weights that become log-normalized after closure, a common technique in structural brain networks.
However, resulting distance weights (path length) are bound by the ultra-metric length function $g_{\max}$ and the \emph{drastic} td-norm:

\begin{equation*}
    \label{eq_drastic_TDNorm}
    g_{\textrm{drastic}} (d_{ik},d_{kj}) =
    \begin{cases}
        d_{ik} & \text{when $d_{kj}=0$}\\
        d_{kj} & \text{when $d_{ik}=0$}\\
        +\infty & \text{otherwise.}
    \end{cases}
\end{equation*}

\noindent That is, all possible td-norms $g$ in our distance backbone framework obey:

\begin{equation}
    \label{eq_TDNorm_order}
    g (d_{ik},d_{kj}) \in [g_{\max} (d_{ik},d_{kj}), g_{\textrm{drastic}} (d_{ik},d_{kj})], \forall d_{ik},d_{kj} \in [0, + \infty],
\end{equation}

\noindent as previously shown for all isomorphic t-norms \cite{Klir1995}. For instance, the following relationship is straightforward: $g_{\textrm{drastic}} \geq g_{\times} \geq g_{+} \geq g_{\max}$.

\subsection{Triangular geometry and the robustness of complex networks}

Our formulation of distance backbones 
shows that running the APSP algorithm on a distance graph $D(X)$ (the metric closure, 
\S \ref{}),
as is commonly done in complex networks, enforces a topological distortion of the original graph whereby semi-metric edges are forced to satisfy the triangle inequality. The only edges invariant to this closure are on the metric backbone.
These observations of the common metric closure lead to two key conclusions that generalize to any distance backbone produced from the general shortest-path distance closure:

\begin{enumerate}
  \item \textbf{\emph{Semi-triangular distortion}}. Only the distance backbone edges of the original graph exist in the triangular space enforced by a given distance closure---the monoid $\langle d_{ij} \in [0, +\infty], g \rangle$ plus the generalized triangle inequality axiom of eq. \ref{eq_gen_triang}.
      %
      %
      The semi-triangular edges break this geometry, and the \emph{semi-triangular distortion} $s^g_{ij}$ (eq. \ref{eq_edge_distortion}) measures, as a division factor, how much they must be distorted to fit the topology of the triangular space enforced by the distance closure.
      The distance backbone thus comprises triangular edges that function, metaphorically, as wormholes that minimize the (semi-triangular) distance edges via indirect shortest paths.
      This can be seen in the metric backbone example of Figure \ref{fig:metric_backbone}. Nodes $x_j$ and $x_k$ are $s^m_{jk}=4.5$ farther apart via their semi-metric direct edge than via an indirect path on the backbone that goes through $x_m$. In this sense, the shortest distance between $x_j$ and $x_k$ via $x_m$ exists on the metric backbone subgraph that obeys the triangle inequality; it exists ``on the metric geometry.'' In contrast, the direct edge $d_{jk}$ exists outside this metric geometry and is distorted (divided by $s^m_{jk}=4.5$) when the metric closure (APSP) is computed. 
  \item \textbf{\emph{Semi-triangular redundancy}}. Edges on a distance backbone are sufficient to compute all shortest paths according to a path length measure defined by the chosen distance closure (Theorem \ref{th_backbone_is_sufficient}).
      Thus, semi-triangular edges are redundant for shortest path computation and $\sigma^g (D)$ (eq. \ref{eq_percentage_red}) measures the amount of such \emph{semi-triangular redundancy} in $D(X)$. The value of this measure varies among large complex networks built from empirical data, but it is typically very large as shown below for metric and ultra-metric redundancy 
(Table \ref{tableSM}).
\end{enumerate}

\begin{figure*}[ht!]
    \centering
    \includegraphics[width=\textwidth]{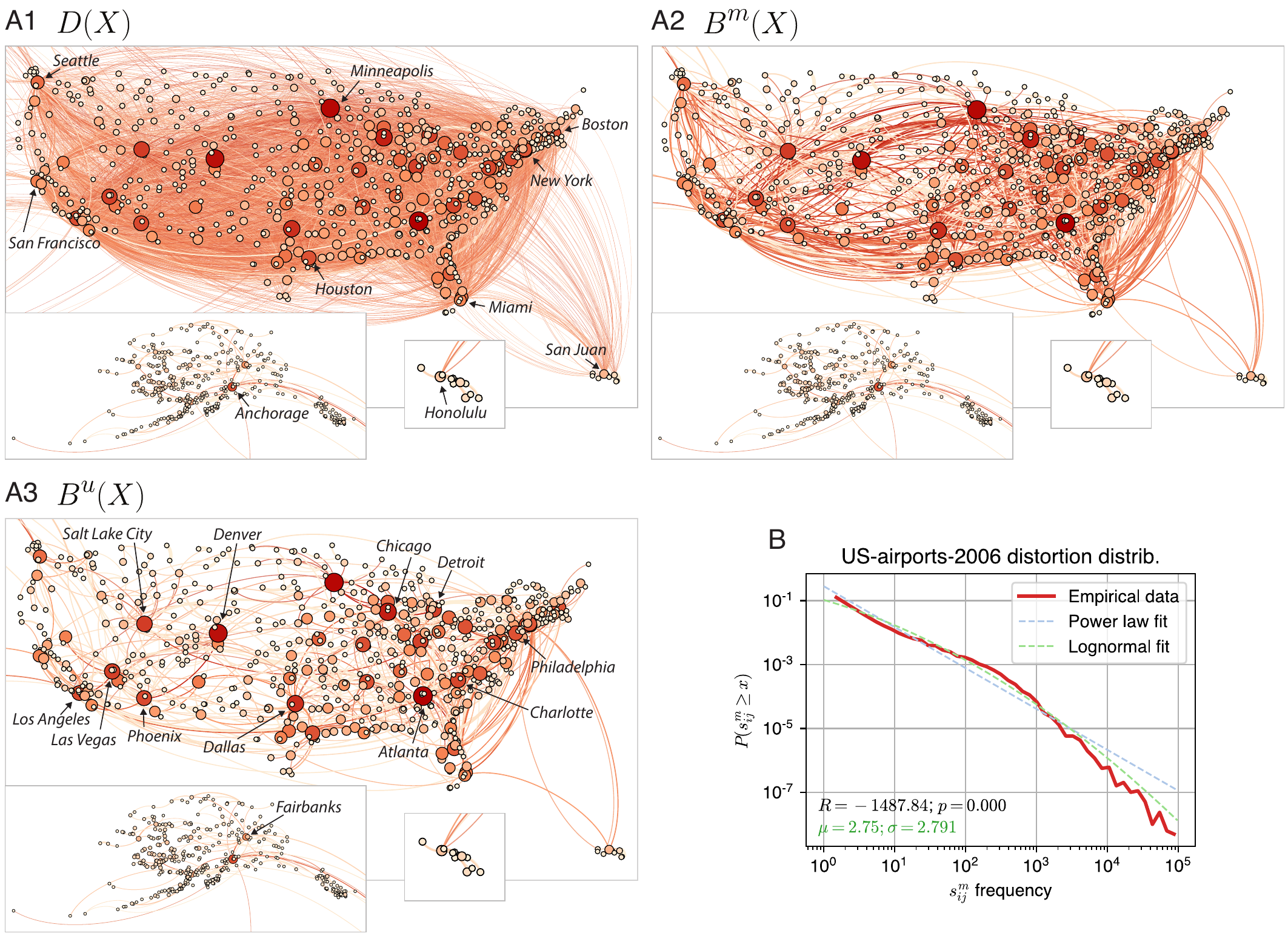}
    \caption{
        \textbf{The U.S. domestic nonstop airport transportation network and backbones for the year 2006.}
        \textbf{A1}. Distance network $D(X)$ with weights representing the average number of passengers between two airports. This is a reconstruction of the network used in Serrano et al. \cite{serrano} that keeps only the largest connected component and removes some U.S territory airports (e.g., Guam).
        \textbf{A2}. Metric backbone $D^m(X)$ with only $\tau^m = 16\%$ of the original edges.
        \textbf{A3}. Ultrametric backbone $D^u(X)$ with only $\tau^u = 9\%$ of the original edges.
        \textbf{B}. Log-binned distribution of semi-metric distortion $s^m_{ij}$ values for the $\sigma^m = 84\%$ of semi-metric edges in the network.
        Both a log-normal ($ \langle s_{ij}^m \rangle = 4.25$, SD $=1.916$) and a power law fit are shown; a comparison between the two favors the former as a better representation of the data. The best power law fit has an exponent $\gamma = 1.353$.
        Data fitted using the `powerlaw' python package \cite{Clauset2009powerlaw,Alstott2014powerlaw}.
    }
    \label{fig:US-Airports-2006}
\end{figure*}

        %
        %
        %


The concept of metric and semi-metric edges, as well as their proportion in a distance graph, relates directly to several key concepts in the study of complex networks.
\emph{Edge betweeness centrality} is defined as the number of shortest paths that pass through an edge in a graph \cite{Girvan2002}.
Since the only edges that contribute to shortest paths are on the metric backbone, it is clear that betweeness centrality is positive for metric edges and null for semi-metric edges.
The \emph{distortion} $s^m_{ij}$ of semi-metric edges (obtained by setting $g \equiv +$ in eq. \ref{eq_edge_distortion}), however, varies widely. In other words, among the edges that have null betweeness centrality, some are much more semi-metric than others depending on how strongly they break the triangle inequality.
For instance, Figure \ref{fig:US-Airports-2006}B shows that the $s^m_{ij}$ distribution for semi-metric edges is very heterogeneous with a wide variation of values in a network of air traffic between more than a thousand U.S. Airports.
%
%
The semi-metric distortion parameter $s^m_{ij}$ thus offers a finer characterization of edges not on the metric backbone than does betweeness centrality---those edges that do not contribute to shortest paths but can contribute to other phenomena on networks, including modularity and diffusion. This is meaningful as it impacts shortest path robustness, as discussed below.
%


The \emph{distribution of shortest path length} is also important for complex networks. We expect, for instance, a small mean value of such a distribution, $\langle d^{T,m}_{ij} \rangle$, in both the Erdos-Renyi model of random graphs and in small-world graphs \cite{Dorogovtsev2003}.
Interestingly, semi-metric edges do not affect this distribution at all since they do not contribute to the computation of shortest path length. Indeed, only the edges in the metric backbone contribute to this distribution.
Therefore, removing a semi-metric edge from a distance graph does not change its distribution of shortest path length, but removing a single metric edge (one on the metric backbone) may increase the average shortest path length, since every edge in the metric backbone participates in at least one of the shortest paths.
We can thus say that the shortest path length distribution of distance graphs is \emph{robust} to semi-metric edge removal, but is affected by metric edge removal. Therefore, the smaller the metric backbone (small $\tau^m$, large $\sigma^m$), the more robust the distribution of shortest path length of $D(X)$ is to random edge removal.

\begin{figure}[ht!]
    \centering
    \includegraphics[width=0.88\textwidth]{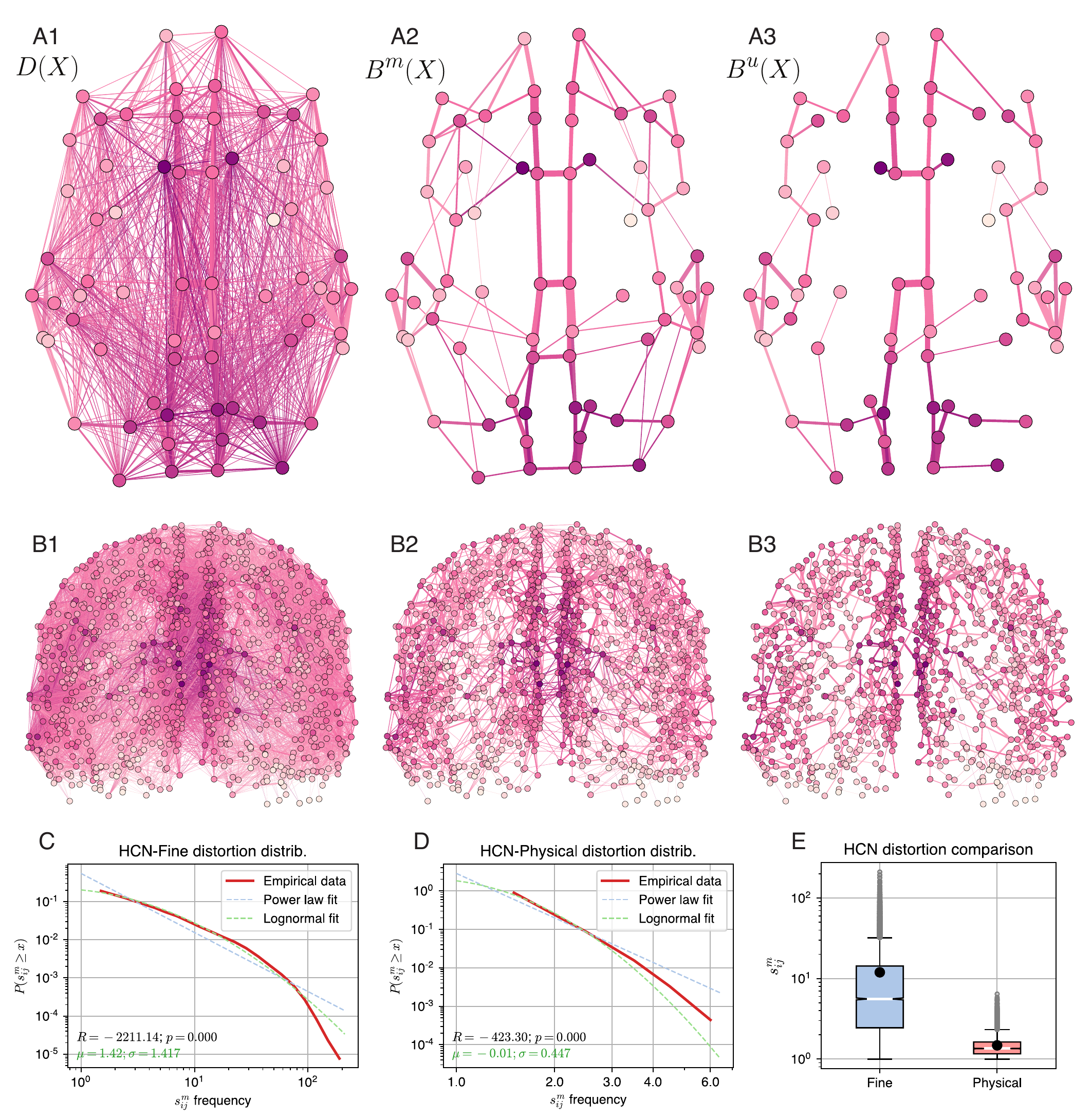}
    \caption{
        \textbf{Human Connectome Network and Backbones.}
        \textbf{A1-3}: \textit{HCN-Coarse}. \textbf{B1-3}: \textit{HCN-Fine}.
        \textbf{A1,B1}: Original distance Networks \cite{hagmann2008mapping}, whose distance weights are inversely proportional to the volume of cortico-cortical axonal pathways between brain regions (nodes), obtained via diffusion spectrum imaging 
        (\S \ref{section_methods}).
        \textbf{A2,B2}. Metric backbone $D^m(X)$ with only $\tau^m = 9.23\%$ and $\tau^m = 17.57\%$ of original edges for  \textit{HCN-Coarse} and \textit{HCN-Fine}, respectively. 
        \textbf{A3,B3}. Ultrametric backbone $D^u(X)$ with only $\tau^u = 5.66\%$ and $\tau^u = 5.53\%$ of original edges for \textit{HCN-Coarse} and \textit{HCN-Fine}, respectively.
        \textbf{C}.
        Log-binned distribution of semi-metric distortion $s^m_{ij}$ values for the $\sigma^m = 82\%$ of semi-metric edges in \textit{HCN-Fine} network.
        Both a log-normal ($ \langle s_{ij}^m \rangle = 1.42$; SD $=1.417$) and a power law fit are shown; a comparison between the two favors the former as a better representation of the data. The best power law fit has an exponent $\gamma \approx 1.6$.
        \textbf{D}.
        Log-binned distribution of semi-metric distortion $s^m_{ij}$ values for the $\sigma^m = 51\%$ of semi-metric edges in \textit{HCN-Physical} network.
        Both a log-normal ($ \langle s_{ij}^m \rangle = -0.01$; SD $=0.447$) and a Powerlaw fit are shown; a comparison between the two favors the former as a better representation of the data.
        \textbf{E}. Box plot of semi-metric distortion $s^m_{ij}$ values comparing networks \textit{HCN-Fine} and \textit{HCN-Physical}.
        Data fitted using the `powerlaw' python package \cite{Clauset2009powerlaw,Alstott2014powerlaw}.
    }
    \label{fig:hagmann-66-998}
\end{figure}

For example, in the Airport traffic network depicted in Figure \ref{fig:US-Airports-2006}, it is desirable to have a robust shortest path length distribution so that removal of an edge does not significantly increase the distance between cities. Indeed, the metric backbone makes up only $\tau^m (D) \approx 16\%$ of the network (Table \ref{tableSM}). Random edge removal (meaning the interruption of all air traffic between two cities) will thus affect shortest paths on the network less than a sixth of the time, which denotes robustness to this type of disturbance.
Natural networks can be even more robust to this type of removal. For instance, the metric backbone of a Human Connectome Network (HCN) \cite{hagmann2008mapping} shown in Figure \ref{fig:hagmann-66-998} makes up only $\tau^m (D) \approx 9\%$ or $18\%$ of the network, depending on the size of the brain parcellation used, with corresponding redundancy of $\sigma^m (D) \approx 91\%$ or $82\%$, respectively. This means that the shortest paths on the network are very robust to random edge removal as they are affected only between a fifth and a tenth of the time.
Analysis of other networks below 
reveals a similar phenomenon in networks across biological, technological, and social realms.

While complex networks such as the HCN can display high robustness to random edge removal by being organized around very small metric backbones, the random removal of edges from the metric backbone itself can have varying impacts on the distribution of shortest path length.
The impact depends on the topology of the backbone itself as well as the shape of the distribution of semi-metric distortion values $s^m_{ij}$, which are thus additional robustness mechanisms available to complex systems.
Removal of edges from the backbone tends to increase some shortest paths and thus the average shortest path length, but the amount of increase depends on the available alternative paths.
If the backbone itself contains alternative paths of similar length, the impact of removal is minimal; this situation is more likely if the backbone preserves strong transitivity (or community structure) with a small number of bridges. Similarly, if there are many edges with very small semi-metric distortion $s^m_{ij} \approx 1$ that are not on the backbone, then an edge randomly removed from the backbone is likely to be replaced by one with very small distortion and thus small impact on the shortest path distribution, as is the case for the \textit{HCN-Physical} network as discussed below (see also Figure \ref{fig:hagmann-66-998}). Conversely, if there are many semi-metric edges with large distortion $s^m_{ij} \gg 1$, random removal of edges from the backbone is likely to have a big impact on the shortest path distribution and its average value.
This further highlights the importance of the finer characterization of semi-metric edges afforded by the distortion measure and its distribution, but not by betweeness centrality.

Consider the Figure \ref{fig:metric_backbone} example; removing edge $b^m_{ij}$ of the backbone of network $D(x)$ results in minimal change to shortest paths, only affecting the shortest path between $x_i$ and $x_j$: $d^{T,m}_{ij}: 9 \rightarrow 10$. The metric backbone does not require the addition of another edge and the average shortest path length changes very little $\langle d^{T,m} \rangle = 4.9 \rightarrow 5$ due to the short indirect distance between nodes $x_i$ and $x_j$, via nodes $x_l,x_k$ and $x_m$. This strong transitivity allows the backbone to lose edge $b^m_{ij}$ with minimal impact on the distance closure  $D^{T,m}$.
%
%
In contrast, removing edge $b^m_{jm}$ results in a big change to shortest paths: $d^{T,m}_{jm}: 1 \rightarrow 10$, $d^{T,m}_{jk}: 2 \rightarrow 9$, $d^{T,m}_{jl}: 6 \rightarrow 13$. The metric backbone requires the addition of the previously semi-metric edge $d_{jk} \rightarrow b^m_{jk}=9$ and the average shortest path length changes considerably as well $\langle d^{T,m} \rangle: 4.9 \rightarrow 7.2$.
Another case is removing edge $b^m_{il}$, which also requires adding previously semi-metric edge $d_{ik} \rightarrow b^m_{ik}=9$ to the backbone, but has less impact on the shortest path distribution: $d^{T,m}_{il}: 4 \rightarrow 13$, $d^{T,m}_{ik}: 8 \rightarrow 9$, $d^{T,m}_{im}: 9 \rightarrow 10$. The average shortest path length changes less than in previous case  $\langle d^{T,m} \rangle: 4.9 \rightarrow 6$.

The difference between the last two cases depends on the semi-metric distortion of the previously semi-metric edges that need to be added to the backbone after edge removal. While the first is a case of adding an edge with high semi-metric distortion $s^m_{jk}=4.5$ to the backbone, the second is the opposite $s^m_{ik} = 1.125$  (see Figure \ref{fig:metric_backbone}, bottom, right).
Notably, edges $d_{jk}$ and $d_{ik}$ both have null betweeness centrality in the original graph despite such distinct impacts on shortest paths after removal of metric edges from the backbone. This illustrates how, compared to betweeness centrality, the semi-metric distortion measure and its distribution may be more characteristic of impact on and robustness of the shortest path length distribution.


The ultra-metric backbone is also very useful for characterizing robustness of shortest-paths to attacks. It derives from the strongest td-norm $g$  (eq. \ref{eq_TDNorm_order}), meaning that the shortest possible path length in a distance graph $D(X)$ is given by the ultra-metric closure as captured by eq. \ref{eq_TDNorm_order}.
Therefore, edges in the ultra-metric backbone include the strongest associations between node-variable pairs in the original data used to produce $D(X)$ and constitute a subgraph of the metric backbone, 
as exemplified in Figure \ref{fig:ultra_backbone}.
Because ultra-metric edges are defined by distance weights that are much smaller than any indirect path connecting their respective nodes (with length computed by any $g$), they are likely to be included in many standard shortest paths and thus observe high edge betweeness.
In other words, the ultra-metric backbone contains the strongest pairwise associations that are most likely links in many paths associating multiple variables. Therefore, attacks on ultra-metric edges are likely to strongly impact the distribution of shortest path lengths and its average value.
%
%
The relative size of the ultra-metric to the metric backbone ($\tau^u/\tau^m$) is thus an indication of how robust to attack the metric backbone and the distribution of shortest-path length in a graph are (see values for various networks in Table \ref{tableSM}).

In addition to betweeness centrality and robustness of shortest path length distribution, the metric backbone affects all measures associated with shortest path length, such as efficiency, reachability, and modularity.
For instance, removal of metric edges from the original distance graph is likely to break the graph into separate components since all bridges are included in the metric backbone (Corollary \ref{th_backbone_bridges}). The likelihood is particularly high when using thresholding or other reduction techniques that do not consider the metric backbone 
(\S \ref{section_discussion}).

Semi-metric edges, on the other hand, do not affect shortest paths and do not form bridges yet they fill-up the graph and impact measures that are related to local connectivity such as clustering coefficient, node degree, and transitivity ratio (different from transitive closure).
Modularity in particular depends on both types of edges since metric edges include all bridges and semi-metric edges affect local connectivity.
Similarly, the small-world phenomenon on weighted graphs depends on both types of edges as average shortest path length depends only on the metric backbone and the clustering coefficient depends on both types of edges.
Table \ref{tableClustering} in SI shows how the clustering coefficient tends to decrease when semi-metric and semi-ultra-metric edges are removed from various networks. 
A study of the preservation of community structure in the metric backbone is forthcoming \cite{epibackbone_working}.

\subsection{Backbones of networks across domains}
\label{section_other_nets}

\begin{table*}
    \footnotesize
    \centering
    \begin{tabular}{cl|rrr|r|r|r|l}
        \toprule
        & $D(X)$ & $|X|$ & $|d_{i>j}|$ & $\delta $ & $\tau^m$ 
        & $\tau^u$ 
        & $\tau^u/\tau^m$ & ref. \\
        \midrule
        %
        \parbox[t]{2mm}{\multirow{2}{*}{\rotatebox[origin=c]{90}{Techn.}}}
        & U.S.-airports-2006 & 1,075 & 11,973 & 2.07 & 16.14 
        & 8.98 
        & 55.64 & \cite{serrano} \\
        & U.S.-airports-500 & 500 & 2,980 & 2.39 & 37.15 
        & 16.75 
        & 45.08 & \cite{colizza2007reaction} \\
        \midrule
        %
        \parbox[t]{2mm}{\multirow{6}{*}{\rotatebox[origin=c]{90}{Biological}}}~
        & Enterocyte GRN & 8,058 & 1,689,653 & 5.21 & 1.75 
        & 0.83 
        & 47.51 & \cite{enterocyte_working} \\
        & HCN-fMRI & 998 & 497,503 & 100 & 5.5 
        & 0.2 
        & 3.64 & \cite{hagmann2008mapping} \\
        & HCN-Coarse & 66 & 1,148 & 53.52 & 9.23 
        & 5.66 
        & 61.32 & \cite{hagmann2008mapping} \\
        & HCN-Fine & 989 & 17,865 & 3.66 & 17.57 
        & 5.53 
        & 31.49 & \cite{hagmann2008mapping} \\
        & C-elegans & 297 & 2,148 & 4.89 & 46.97 
        & 13.97 
        & 29.73 & \cite{WATTS1998} \\
        & HCN-Physical & 989 & 17,865 & 3.66 & 49.25 
        & 5.53 
        & 11.23 & \cite{hagmann2008mapping} \\
        \midrule
        \parbox[t]{2mm}{\multirow{6}{*}{\rotatebox[origin=c]{90}{Social}}} 
        & High-school & 788 & 118,291 & 38.15 & 7.84 
        & 0.66 
        & 8.49 & \cite{Salathe2010highresolution} \\
        & Primary-school & 242 & 8,317 & 28.52 & 9.5 
        & 2.9 
        & 30.51 & \cite{sociopatterns2011primaryschool} \\
        & Freeman & 32 & 266 & 53.63 & 31.96 
        & 11.65 
        & 36.47 & \cite{Freeman:1979} \\
        & Cond-mat-2003 & 27,519 & 116,181 & 0.03 & 77.27
        & 62.77
        & 81.23 & \cite{Newman2001} \\
        & Cond-mat & 13,861 & 44,619 & 0.05 & 81.13 
        & 70.62 
        & 87.05 & \cite{Newman2001} \\
        & Net-science & 379 & 914 & 1.28 & 83.59 
        & 78.45 
        & 93.85 & \cite{newman2006community} \\
        \midrule
        \parbox[t]{2mm}{\multirow{4}{*}{\rotatebox[origin=c]{90}{Knowledge}}}
        & Wikipedia-Fact & 3.4M & 23M & $\approx 0$ & 2 
        & - 
        & - & \cite{ciampaglia2015computational} \\
        & MyLib-keywords & 500 & 115,737 & 92.78 & 4.36 
        & 0.43 
        & 9.9 & \cite{Rocha2005} \\
        & Instagram depression & 3,288 & 230,799 & 4.27 & 8.1 
        & 1.47 
        & 18.12 & \cite{correia2016monitoring,correia2019thesis} \\
        & MyLib-journals & 1,690 & 51,234 & 3.59 & 22.4 
        & 7.59 
        & 33.89 & \cite{Rocha2005} \\
        & MyLib-users & 381 & 6,525 & 9.01 & 27.49 
        & 7.79 
        & 28.32 & \cite{Rocha2005} \\
        \bottomrule
    \end{tabular}
    \caption{
        \textbf{Metric and ultra-metric backbones of networks across domains}.
        $|X|$: number of nodes;
        $|d_{i>j}|$: number of finite distance edges;
        $\delta $: density of distance graph $D(X)$;
        $\tau^m$, $\tau^u$: relative size of metric and ultra-metric backbones. 
        Values of $\delta, \tau^m, \tau^u, $
        and $\tau^u/\tau^m$ are shown as percentages ($\%$); the proportion of semi-metric ($\sigma^m$) and semi-ultrametric ($\sigma^u$) edges are obtained directly from size of backbones since $\sigma = 1-\tau$.
        Rows are ranked by increasing size of metric backbone ($\tau^m$) within each domain group.
        To facilitate comparison, analysis is restricted to the largest connected component and the reported number of nodes $|X|$ refers to that subgraph. See
        \S \ref{section_methods} 
        for additional details and a description of the networks.
    }
  \label{tableSM}
\end{table*}

Using our open-source Python package, we have computed the metric and ultra-metric backbones of various networks across domains ranging from biology to society and technology (\S \ref{section_methods}).
The results are summarized in Table \ref{tableSM}.
Figures \ref{fig:US-Airports-2006} and \ref{fig:hagmann-66-998} further depict the metric $B^m(X)$ and ultra-metric $B^u(X)$ backbones of the \textit{U.S.-airports-2006} traffic network \cite{serrano} and two distinct parcellations of a \textit{human connectome network} (HCN) built from a cohort of five participants \cite{hagmann2008mapping}.


It is striking that more than half of the networks across all domains have metric backbones that contain $\tau^m \leq 20\%$ of the edges in the original network and almost three quarters have $\tau^m \leq 40\%$.
This denotes a great deal of redundancy in the computation of shortest-paths with $\sigma^m \geq 60\%$ for most networks studied.
The two smallest metric backbones found are for a large gene-regulatory network (GRN) of more than 8000 genes interacting in insect intestinal cells \cite{enterocyte_working}, and the very large knowledge network of 3.4 million Wikipedia concepts built for automatic fact-checking \cite{ciampaglia2015computational}, with $\tau^m = 1.75\%$ and  $\tau^m \approx 2\%$, respectively.
It is impressive that such small backbones are sufficient to compute all shortest paths.
For the \textit{Enterocyte GRN}, this suggests that $\sigma^m = 98.25\%$ of all gene interactions have stronger alternative pathways, which adds much robustness to gene regulation.
For the knowledge network, this means that automatic fact-checking inferences pursued via shortest-paths on Wikipedia (such as in \cite{ciampaglia2015computational}) can ignore $98\%$ of concept associations in the knowledge graph.


All metric backbones, except for scientific collaboration networks (discussed below), are composed of fewer than $\tau^m = 50\%$ of the original edges.
An interesting case of a fairly large backbone is the \textit{HCN-Physical} network. The graph is built from the finer parcellation of the human brain used to study the human connectome \cite{hagmann2008mapping} but the edges denote the physical length of each connection rather than the volume of cortico-cortical axonal pathways captured by the \textit{HCN-Fine} network 
(\S \ref{section_methods}.)
Since the edges of \textit{HCN-Physical} are constrained by a physical, 3D geometry, we expect a Euclidean distance graph except for minor variations due to connections being curved in the interior of the brain. Indeed this is what our analysis confirms. The metric backbone is about half of the original network ($\tau^m=49.25\%$) but the semi-metric edges are almost metric with very little semi-metric distortion as shown in Figures \ref{fig:hagmann-66-998}D and \ref{fig:hagmann-66-998}E. The largest value of $s_{ij}^m = 6.4$, with mean and median values of 1.5 and 1.4, respectively.

In contrast, the \textit{HCN-Fine} network has a small metric backbone ($\tau^m=17.57\%$) and its distribution of semi-metric distortion displays a wide variation as shown in Figures \ref{fig:hagmann-66-998}C and \ref{fig:hagmann-66-998}E. The largest value of $s_{ij}^m \approx 200$, with mean and median values of 12 and 5.6, respectively.
In other words, while the \textit{HCN-Fine} network behaves like most natural networks we have observed---a small metric backbone with the edges outside of the backbone characterized by a long-tailed distribution of semi-metric distortion---the \textit{HCN-Physical} network is essentially entirely metric with almost negligible semi-metric distortion to about $50\%$ of the edges, as the box plot comparison in Figure \ref{fig:hagmann-66-998}E emphasizes \footnote{The \textit{HCN-Physical} network can be seen as a real-world null model of a distance graph on Euclidean geometry with a minor uniformly random semi-metric distortion to edges (with a $50\%$ chance of breaking the triangle inequality), which also explains its very small ultra-metric backbone ($\tau^u =5.53\%$).}.


The networks in Table \ref{tableSM} also reveal that the ultra-metric backbone is most often a small subgraph of the original distance graph: For $68\%$ of the networks, $\tau^u \leq 10\%$.
Therefore, the edges that most affect shortest paths are difficult to damage by random attack. 
Of the networks we studied, the shortest paths of the \textit{HCN-fMRI} network are the most robust to random attack on two counts. 
First, because the metric backbone is very small ($\tau^m = 5.5\%$), almost $\sigma^m \approx 95\%$ of random attacks would be on semi-metric edges that have no effect on shortest paths.
Second, the ultra-metric backbone is composed of only $\tau^u = 0.2\%$ of the edges in the original graph, which represents only $\tau^u/\tau^m = 3.64\%$ of the small metric backbone. Thus, the edges with most impact on shortest-paths would rarely be damaged under random attack.
Even though the meaning of shortest paths in networks built from fMRI correlation data is not obvious, 
this suggests that the functional activity of human brains is very robust to disturbances from the point of view of shortest-path communication between regions.

Very similar behavior is observed for the social network of \textit{High School} students \cite{Salathe2010highresolution}, the knowledge networks of keywords obtained from the user profiles of a digital library (\textit{MyLib-Keywords}), and a large set of Instagram user posts related to depression (\textit{Instagram-depression}).
The robustness of shortest-paths in the social network of High School students means, for instance, that in the presence of an infectious epidemic, the transmission speed and spread would be virtually unaffected by removal of a very large number the social connections\cite{epibackbone_working}. Similarly, in knowledge networks removal of connections between concepts or keywords hardly disturbs associations (inferences), such as recommendations \cite{Rocha2005} or adverse drug reactions \cite{correia2016monitoring}, made through indirect paths on the backbones of those networks.

Sometimes, however, the ultra-metric backbone is composed of a large proportion of the metric backbone: $\tau^u/\tau^m \gtrapprox 50\%$. 
This can happen in situations where the metric backbone is very small ($\tau^m \lessapprox 15\% $), such as the \textit{Enterocyte GRN}, \textit{HCN-Coarse}, or \textit{U.S.-airports-2006} networks. It can also happen in situations where the metric backbone is large ($\tau^m > 75\% $), such as the scientific co-authorship networks \textit{cond-mat-2003}, \textit{cond-mat}, and \textit{net-science}, shown in Figure \ref{fig:netscience}.
In the first case, because the metric backbone is small, random attacks are likely to hit semi-metric edges and thus have no effect on shortest-paths overall. However, directed attacks to the metric backbone are very likely to hit the ultra-metric backbone and thus result in a large impact to the shortest path distribution.
This is seen in the examples in Figures \ref{fig:metric_backbone} and \ref{fig:ultra_backbone}. As discussed above, the removal of any ultra-metric edge, $\{b^u_{jm}, b^u_{mk}, b^u_{il}, b^u_{lk} \}$ has a much stronger impact on shortest paths than the removal of the metric edge $b^m_{ij}$.
The \textit{U.S.-airports-2006} network shown in Figure \ref{fig:US-Airports-2006} provides a real-world example of this behavior. While its metric backbone is small, it is composed of $\tau^u/\tau^m \approx 56\%$ ultra-metric edges. This suggests that shortest-paths in the overall network would be very affected by interruptions to any one of more than half of the connections in the metric backbone. In other words, air traffic is fragile to targeted attacks on backbone connections.

A similar case is seen in 
the \textit{HCN-Coarse} network. Its ultra-metric backbone is $\tau^u/\tau^m \approx 61\%$ of a small metric backbone. This suggests that while there is much redundancy and robustness in calculation of shortest-paths, the few edges that contribute to shortest-paths are most important. 
As seen in Figure \ref{fig:hagmann-66-998}.A, most of these edges involve regions that bridge the hemispheres (near the cortical midline) and are located close to the insula.
These observations are coherent with the fact that these edges and regions channel much of the communication between hemispheres and into and out of the insula, a central cortical hub.  
Similar behavior is observed in the finer brain parcellation seen in Figure \ref{fig:hagmann-66-998}.B, even though in this case the metric backbone is a little more robust to attack, as $\tau^u/\tau^m \approx 32\%$.

\begin{figure}[ht!]
    \centering
    \includegraphics[width=\textwidth]{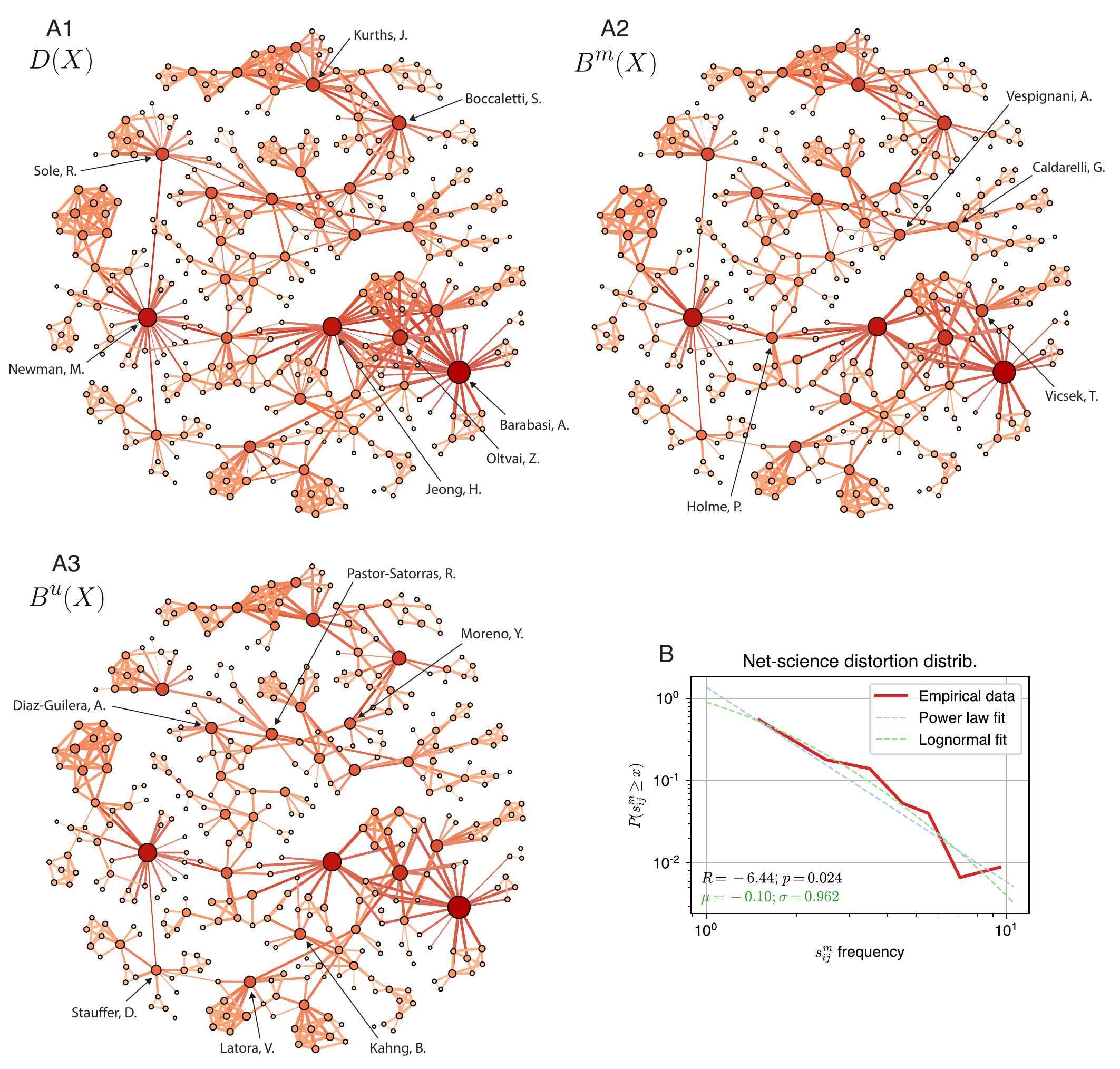}
    \caption{
        \textbf{Net-science network and backbones.}
        \textbf{A1}. Original Net-science distance network.
        \textbf{A2}. Metric backbone $D^m(X)$ with $\tau^m = 83.59\%$ of original edges.
        \textbf{A3}. Ultrametric backbone $D^u(X)$ with $\tau^u = 78.45\%$ of original edges.
        \textbf{C}. Log-binned distribution of semi-metric distortion $s^m_{ij}$ values for the $\sigma^m = 16.41\%$ of semi-metric edges in network.
        A log-normal ($ \langle s_{ij}^m \rangle = -0.10$; SD $=0.962$) and a power law fit are shown; comparison favors the former as a better representation of the data.
    }
    \label{fig:netscience}
\end{figure}

The case of a large metric backbone ($\tau^m > 75\% $) that is mostly ultra-metric ($\tau^u/\tau^m > 80\% $) is rare in the networks we analyzed. Indeed, it is only observed in scientific co-authorship networks, such as the one of Network Science scientists shown in Figure \ref{fig:netscience}, which are known to be particularly modular \cite{newman2006community}.
These networks contain many bridges between largely decoupled communities, and bridges must be on the backbone (Corollary \ref{th_backbone_bridges}). Our results reveal that communities tend to be not only metric but ultra-metric. This means that the shortest path between authors in the same community is by far the direct edge between them (they are direct co-authors).
%
In contrast, the other social networks we analyzed (especially the primary and high school social contact networks) are much less metric. Even though these networks also have strong community structure \cite{Salathe2010highresolution,sociopatterns2011primaryschool,epibackbone_working} most social ties are stronger via indirect connections that break the triangle inequality.
The scientific collaboration networks studied are therefore much more fragile to random attack than the other social networks. Given their very large metric and ultra-metric backbones, removal of a random edge is likely to strongly affect the distribution of shortest paths. This is especially true in the collaboration network of Network Science scientists shown in Figure \ref{fig:netscience}, where $\tau^u/\tau^m \approx 94\%$, which is the most strongly triangular ultra-metric network we analyzed.

Finally, to emphasize that unlike shortest paths the clustering coefficient depends on both metric and semi-metric edges, Table \ref{tableClustering} in SI shows the values of this coefficient for several networks and their backbones. As expected, the removal of semi-metric edges tends to reduce the clustering coefficient, which is essentially null for the ultra-metric backbone. A more detailed study of the distance backbone and community structure is beyond the scope of this article, and is forthcoming \cite{epibackbone_working}.


\section{Discussion}
\label{section_discussion}

\subsection{Related Concepts}
\label{section_related_concepts}


The distance backbone is related to but distinct from the concept of \emph{transitive reduction} in graph theory \cite{aho1972transitive}. The latter is the smallest directed graph that has the same reachability, and thus transitive closure, as the original graph.
The transitive reduction has been expanded to weighted, directed graphs \cite{klamt2010transwesd} and general fuzzy graphs and relations \cite{chakraborty1985reduction}, but developed mostly for directed, unweighted graphs. 
Its algorithmic complexity is larger than that of the distance backbone (via Dijkstra's algorithm or distance product) precisely because it accounts for directed graphs that can be acyclic or cyclic.
Certainly, the distance closure framework can be expanded to deal with directed graphs by relaxing the symmetry axiom of distance functions, which become quasimetrics \cite{simas_rocha_2014_MWS}. Indeed, this has been done to compute shortest paths in the large-scale Wikipedia knowledge graph for automatic fact-checking \cite{ciampaglia2015computational}, even though in this case there was no performance advantage over using the undirected distance graph 
\S \ref{section_results};
Table \ref{tableSM}). Here we stick to generalized metric spaces that retain the symmetry axiom, but exploration of quasimetric spaces is certainly a future development possibility.

A very important difference is that the distance backbone does not necessarily return the smallest graph whose closure is the same as that of the original graph. It includes \emph{all} triangular edges, some of which may be removed and still yield the same distance closure graph if there are alternative paths of the same length in the backbone.
For instance, in the metric backbone example of Figure \ref{fig:metric_backbone}, if the direct distance between $x_i$ and $x_j$ were $d_{ij}=10$ (rather than 9), edge $d_{ij}$ would still be in the metric backbone (because the triangle inequality would not be broken) but not in the transitive reduction.
Finally, the general distance closure framework for weighted graphs that we pursue allows us to explore (infinitely) many transitivity criteria based on alternative path length measures (via td-norm $g$ and eq. \ref{eq_length}) and distinct path aggregation measures (via td-conorm $f$). Here we explore only shortest-path distance closures and backbones ($f \equiv \min $), but other weighted graph criteria such as diffusion \cite{coifman2005geometric} and diffusion-like distances, such as resistance \cite{estrada2010resistance, bozzo2013resistance} and communicability \cite{estrada2012complex, silver2018tuned}, are also possible \cite{simas_rocha_2014_MWS}\footnote{To calculate diffusion distances, rather than assume the distance between a node pair is a single shortest path, we integrate the length of all possible paths between the pair with, for example, an averaging operator. This allows us to consider situations where information (or other phenomena) traverse networks by some form of random or stochastic walk on locally available paths. Our distance closure methodology includes such types of distances \cite{simas_rocha_2014_MWS}, however, diffusion-like distance backbones likely require additional convergence criteria because, for the general case, the transitive closure is reached only for $\kappa \rightarrow \infty$ in \ref{eq_TC1}. In contrast, for the shortest-path distance closures we pursue here, we have $\kappa \leq |X|-1$ (\S \ref{section_TC}), which leads each one to convergence to a (unique) distance backbone in a (small) finite number of matrix compositions. Therefore, extension of distance backbones to diffusion-like distances is left for future work.}.
%



In addition to transitive reduction, in network science there are several other graph reduction techniques available. The most obvious is to remove edges above a certain distance value $d_{ij} > \alpha$, known as \textit{thresholding} distance edge weights. Unfortunately, with this method bridges that link graph components can easily be removed, dramatically altering the distribution of shortest paths. In contrast, the distance backbone guarantees that the distribution of shortest paths is unaffected with the removal of semi-metric edges.
Figure \ref{fig:backbone_thresholds} in SI shows the destruction of connectivity and shortest paths by thresholding the simple example graph of Figures \ref{fig:metric_backbone} and \ref{fig:ultra_backbone}.
Furthermore, because real-world networks tend to be organized in a multiscale manner whereby distance weights are not uniformly and independently distributed but can be hierarchically organized and locally correlated, a global threshold can easily remove features and structures that are present only at a large distance/length scale (or low proximity/strength) \cite{serrano}. In contrast, the distance backbone preserves the original graph connectivity (and all shortest paths), including long-range path lengths. As discussed below, in forthcoming work we show that, for social contact networks at least, the metric backbone also preserves community structure \cite{epibackbone_working}.


Unlike edge thresholding, the \textit{minimum spanning tree} (MST) of a distance graph preserves the connectivity of the original graph. The MST is the acyclic subgraph that connects all edges of a connected graph
and minimizes the sum of all distance edges included.
Because as a tree it is acyclic, the MST graph reduction is very destructive to local community structure\cite{serrano}.
Importantly, unlike the distance backbone (Theorem \ref{th_backbone_is_sufficient}), the MST is not sufficient to compute the metric closure\footnote{Since the MST is defined for summing edges, the metric closure ($g=+$) is the appropriate comparison. However, even considering extensions of the MST concept to minimize any path length measure (with any $g$ in eq. \ref{eq_length}), the MST subgraph, by construction, is not sufficient to compute the respective distance closure.}. 
In other words, the MST reduction does not guarantee the preservation of the distribution of shortest paths in the original graph.
Figure \ref{fig:backbone_thresholds}B in SI shows the MST of the simple example graph of Figures \ref{fig:metric_backbone} and \ref{fig:ultra_backbone}. In this case, the metric edge $d_{ij}$ is removed from the MST resulting in an increase of the shortest path between nodes $x_i$ and $x_j$: $d^{T,m}_{ij}: 9 \rightarrow 10$.


Several graph reduction techniques center on the extraction of a backbone subgraph. In particular, the \textit{disparity filter backbone} \cite{serrano} has been proposed precisely to deal with the shortcomings of thresholding and MST graph reductions to preserving the multiscale structure of complex networks---even for large distance weights (or low strength/proximity).
However, as a statistical technique, it requires a null model with a significance level parameter to maintain (not filter out) the structure that larger distance weights can bring.
In contrast, the distance backbone does not need a statistical null model as it is axiomatically defined by the chosen geometry monoid 
(\S \ref{section_distance_backbone}).
%
Moreover, the distance backbone guarantees the same connectivity and no change to the shortest path distribution, unlike the disparity filter which has no such guarantee and can even remove nodes. 
For instance, the disparity filter backbone of the \textit{U.S.-airports-2006} network of over 1000 airports presented in 
\S \ref{section_results}
is composed of $24\%$ of the original edges (for a significance level $\alpha=0.2$, per Table 1 in \cite{serrano}). In addition to removing $76\%$ of all edges, it also removes $23\%$ of all nodes. Moreover, shortest paths are not preserved  by this reduction technique even for the $77\%$ of nodes that remain (Table 1 and Figure 1 in \cite{serrano}).
In contrast, the metric backbone of the same U.S. airport traffic network shown in Figure \ref{fig:US-Airports-2006} is almost half the (edge) size ($\tau^m = 16.14\%$) but keeps all node variables with the same connectivity and shortest path length distribution.
%
%
Indeed, many metric backbones shown in Table \ref{tableSM} are substantially smaller and still preserve these important characteristics of the original distance graphs, and tend to preserve community structure in social networks \cite{epibackbone_working}.


Another related concept is that of \emph{pathfinder networks} \cite{schvaneveldt1990pathfinder}, which, similarly to distance backbones are obtained by removing graph edges that break the triangle inequality for different ways of computing path length. They are extracted from distance graphs by adjusting two parameters: $r$ sweeps a space of transitivity criteria by varying how path length is computed, and $q$ sets the maximum number of edges that are considered to compute indirect paths.
Typically, $r$ sweeps a space restricted to the Minkowski distance, of which it is a parameter. For $r=1$, path length is the sum of the edges, and for $r=+\infty$ it is the maximum edge in the path---equivalent to our formulations $g \equiv +$ and $g \equiv \max$, respectively.

The distance closure approach \cite{simas_rocha_2014_MWS} we pursue is instead based on the more general algebraic space formed by t-norms and t-conorms from probabilistic metric spaces and fuzzy set theory \cite{Klir1995} 
(see \S \ref{section_back}).
%
In this formalism, operation $g$ that defines path length (eq. \ref{eq_length}) is not circumscribed to the Minkowski distance or any specific metric, but rather expanded to any metric space monoid. Furthermore, while in the present work we set $f \equiv \min $ to focus on shortest-path distance closures, the framework allows for other path aggregations such as averaging, diffusion, or diffusion-like distances \cite{simas_rocha_2014_MWS} (see discussion above).
Additionally, since we always compute the full distance closure 
(\S \ref{section_isomorphism}), 
we consider all possible violations of the generalized triangle inequality for any path length, not just length $q$. In other words, in our formulation, $q$ is always maximized to the graph diameter---which is equivalent to $\kappa$ in eq. \ref{eq_TC1} and $\kappa \leq |X|-1$ 
(\S \ref{section_back}).

Importantly, while we remove semi-triangular edges to reveal the distance backbone, we do not consider them irrelevant nor do we throw them away.
Semi-triangular behavior and properties are very relevant to various network science concepts including clustering, expanding the notion of betweeness centrality, and understanding the robustness of shortest paths to random attack better, as shown in 
\S \ref{section_results}.
%
Therefore, we characterize the semi-triangular distortion, overall proportion, and distribution in the graphs we analyze. 
Indeed, precisely characterizing the distortion inherent in semi-triangular edges has led to the development of competitive recommender systems \cite{Rocha2002,Rocha2005,Simas2012,simas_rocha_2014_MWS}, link prediction tasks in computational biology \cite{abihaidar_GB08}, information extraction in social media \cite{correia2016monitoring}, and even to the ability to distinguish healthy from autistic, depressive, and psychotic human cohorts from brain (fMRI) networks \cite{simas2015semi}.
Thus, the distance backbone methodology provides a complete characterization of triangular organization that goes beyond the edge removal procedures of pathfinder networks.

\subsection{Future development: improving explanation in network phenomena}
\label{section_future}


Since the distance backbone is composed of the set of edges sufficient to compute all shortest paths for a given length measure, we expect it to include the preferred and most parsimonious communication paths in the network. Furthermore, because the distance backbone is typically very parsimonious, the paths it contains serve as preferred ``lines of argumentation'' to explain and visualize important paths in the network.  Following are current and future areas of potential development.

Epidemiology is one of the fields where network and data science have led to concrete advances \cite{kraemer2020effect, y2018charting,wang2020response} since the structure of the contact network in which a disease propagates plays a crucial role. 
Heterogeneous (scale-free) networks strongly favor spread \cite{y2018charting} so distance backbones and the heterogeneity of semi-triangular distortion are likely to be relevant to the study of social contact networks and their role in disease spread.
In forthcoming work we test the hypothesis that the metric backbone comprises the most relevant pathways for epidemic spread processes on social contact networks. We show that in social networks built from contact data, the metric backbone: (a) preserves the original community structure; (b) is a subgraph much smaller than the original distance graph (e.g. primary- and high-school social contact networks in Table \ref{tableSM} have  $\tau^m = 9.5$ and $7.84\%$, respectively); (c) is by far the preferred subgraph of the same size for epidemic transmission; and (d) preserves connectivity, though as discussed above,
other reduction techniques do not guarantee \cite{epibackbone_working}.
Because deleting edges on the backbone, especially the ultra-metric backbone, is likely to result in a measurable impact on average shortest-paths in the network 
(\S \ref{section_results}),
it would be useful to highlight them as preferred disease spreading pathways in actionable epidemiological models. 
Going forward, we will test different distance backbones on the same networks as well as the robustness of the epidemic processes to deletions of specific social connections. Since the metric backbone is the preferred transmission subgraph, containment strategies to most curb epidemic transmission can be studied. Indeed, deletion of ultra-metric edges or edges that must be replaced by highly distorted semi-metric edges is most likely to substantially increase the average path length 
(\S \ref{section_results}) 
and thus overall disease propagation speed.


The study of human brain structural connectivity has revealed characteristic network features. For instance, various studies suggest that inter-modular bridge edges and connector nodes are critical for information integration in the human brain \cite{sporns2011networks}. This analysis hinges on computing betweeness centrality of nodes and edges, as well as how much they contribute to modularity. Therefore, computing the metric backbone of structural connectivity in human brain networks is likely to provide a more nuanced characterization of important nodes and edges in information integration (especially for bridges that are all on the backbone) and those that only contribute to modularity and shortest-path robustness (edges with varying degrees of semi-metric distortion).
Analysis of different distance closures are expected to further add to the toolbox to study structural connectivity of the human brain, as they accommodate various normalization schemes for human brain connectivity measurements 
(\S \ref{section_other_backbones}).
%
Our analysis 
of a human connectome network uncovered high topological redundancy: only $\tau^m=9\%$ or $17.57\%$ of edges contribute to the metric backbone, depending on coarser or finer brain parcellation, respectively 
(\S \ref{section_results},
Figure \ref{fig:hagmann-66-998}). The metric backbone of the functional activity correlation among brain regions of the observed human cohort is even smaller: $\tau^m=5.5\%$.
We also found a wide variation of semi-metric distortion in the human connectome (Figure \ref{fig:hagmann-66-998}).
In forthcoming work we expand this analysis to study the development of structural connectivity in longitudinal human cohorts \cite{neurobackbone_working}.
Indeed, the semi-metric analysis of the human connectome and functional activity has already proven useful to distinguish autistic and depressive cohorts from healthy subjects \cite{simas2015semi,simas2016commentary}.


Another promising use of distance backbones is in reducing the computational complexity of problems involving large data sets and networks, whereby computational complexity of shortest-path calculation can be decreased by removing semi-triangular edges. 
For instance, the Wikipedia knowledge graph used for fact-checking is composed of 14 million edges \cite{ciampaglia2015computational}. Yet, 98\% of these are semi-metric and can be removed in the calculation of shortest-paths used by the fact-checking algorithm. In other words, the metric backbone, which is sufficient to compute all shortest paths and thus all knowledge inferences, is only 2\% of the original network. Removal of of the semi-metric edges results in substantial storage and computation gains.

Likewise, related approaches to infer drug interactions from social-media users and electronic health records also rely on shortest-path calculations on distance graphs built from co-occurrence of drug names and biomedical terminology \cite{correia2020mining}. For instance, the metric backbone of a network built from an \textit{Instagram} cohort of $\approx 7,000$ users associated with depression \cite{correia2016monitoring} contains only 
$\tau^m \approx 8\%$ 
of the edges of the original network (Table \ref{tableSM}).
Interestingly, a previous analysis of this network showed that $\approx 86\%$ of the users in the cohort contributed to the edges in the metric backbone, thus we can remove $\approx 14\%$ of the user timelines from the data set since those only contribute to semimetric edges not on the backbone ($\approx 980$ users) \cite{correia2019thesis}.  In other words, a substantial proportion of users in the data set does not contribute to shortest paths of the resulting network.
Similar proportions of redundant users are observed in analysis of related data sets on \textit{Twitter}. 
Due to the large-scale nature of these graphs, removing semi-metric redundancy directly from graphs or from underlying data sets can significantly improve all network inference algorithms for link-prediction and recommendation that are based on shortest-path calculation \cite{simas_rocha_2014_MWS}.



A related potential use of distance backbones is in simplifying network visualization.  The distance backbone is the subgraph that exists on a specific geometry defined by the length function chosen and its associated generalized triangle inequality 
(\S \ref{section_results}). 
In contrast, semi-triangular edges exist off this geometry. Therefore, distance backbones computed for geometries that are more amenable to 2D or 3D visualization---for example, the metric and Euclidean backbones---allow for a sequential visualization procedure whereby backbone edges are first rendered in direct proportion to their actual distance weights, and semi-triangular edges are subsequently rendered in increasing order of semi-triangular distortion or simply omitted. This would result in a network visualization anchored to a desired, natural geometry.

%

\subsection{A complete characterization of triangular organization}
\label{section_complete}


The general distance closure and backbone methodology 
(\S \ref{section_results}) 
provides a comprehensive understanding of the triangular geometry of complex networks. It includes both the edges that obey generalized triangle inequalities and the semi-triangular edges that break those inequalities, which are not directly characterized by existing complex network measures.
The methodology grounds weighted graphs in well-known geometric axioms of generalized metric spaces and provides both a principled reduction technique for weighted graphs that preserves all shortest paths (the distance backbone) and a novel characterization of the two types of edges and their relationship to network robustness and evolution.
Specifically, in our approach the amount of (shortest-path) redundancy in a distance graph $D(X)$ is given by $\sigma^g$ (eq. \ref{eq_percentage_red}), or its dual measure of the relative size of the distance backbone $\tau^g=1-\sigma^g$ (eq. \ref{eq_percentage_backbone}), for any path length operation $g$ (eq. \ref{eq_length}). In addition, the semi-triangular distortion $s_{ij}^g$ (eq. \ref{eq_edge_distortion}) measures how much a generalized triangle inequality is broken for each edge $d_{ij}$ of $D$, while the distribution of this measure characterizes the overall semi-triangular geometry of graph $D$.
This way the approach characterizes both the edges that contribute to (generalized) shortest-paths as well as those that do not but are involved in other network phenomena such as modularity and diffusion.


\begin{figure}[ht!]
    \centering
    \includegraphics[width=\textwidth]{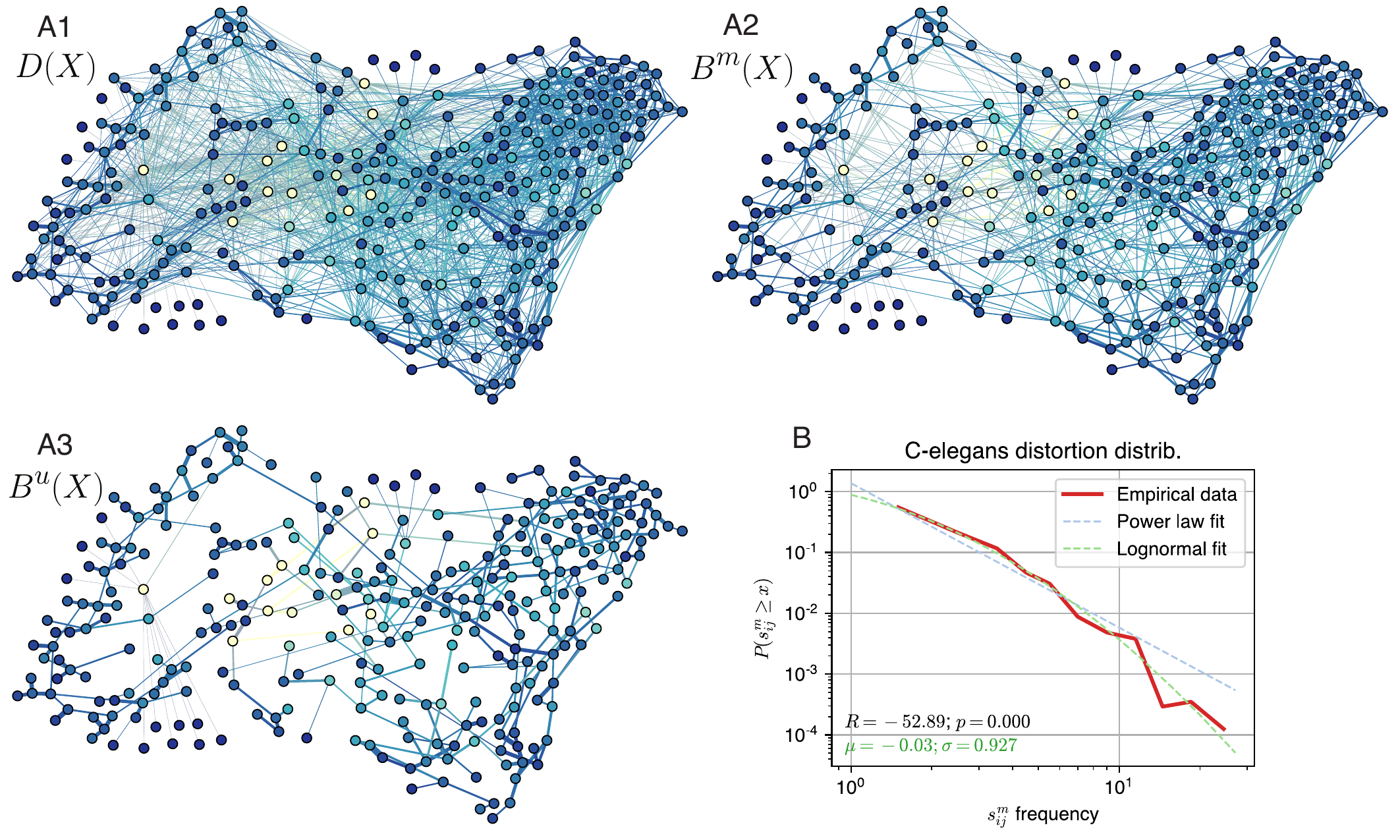}
    \caption{
        \textbf{C-elegans network and backbones.}
        \textbf{A1}. Original C-elegans distance network.
        \textbf{A2}. Metric backbone $D^m(X)$ with $\tau^m = 46.97\%$ of original edges.
        \textbf{A3}. Ultrametric backbone $D^u(X)$ with $\tau^u = 13.97\%$ of original edges.
        \textbf{C}. Log-binned distribution of semi-metric distortion $s^m_{ij}$ values for the $\sigma^m = 53.03\%$ of semi-metric edges in network.
        A log-normal ($ \langle s_{ij}^m \rangle = -0.03$; SD $=0.927$) and a Powerlaw fit are shown; comparison favors the former as a better representation of the data. 
    }
    \label{fig:c-elegans}
\end{figure}


Our analysis of real-world networks demonstrates that semi-triangular edges are pervasive in networks across domains from biology to technology (Table \ref{tableSM}).
The results strongly suggest that the proportion of such edges ($\sigma^g$) plays an important role in complex networks, especially to increase the robustness of shortest-paths to random attacks.  
For instance, information processing in the human brain that depends on shortest-paths that structurally link brain regions seems to be very robust given that most edges are redundant for shortest-path calculation (Figure \ref{fig:hagmann-66-998}). 
Similarly, air traffic across U.S. airports (Figure \ref{fig:US-Airports-2006}) and automatic inference on knowledge networks are robust to most removals of connections between cities and concepts/keywords, respectively. 
%
%
However, that type of robustness is not universal. For instance, the \textit{C-elegans} neuron network shown in Figure \ref{fig:c-elegans} has, proportionally, a much larger metric backbone ($\tau^m \approx 47\%$) than the \textit{HCN-Fine} network ($\tau^m \approx 18\%$) and is thus a comparatively more triangular network. Therefore, random removal of synaptic connections between neurons in the \textit{C-elegans} is more likely to result in the increase of shortest paths between neurons: roughly a 1 in 2 chance of increase, versus a 1 in 6 chance when axonal pathways between brain regions in the human connectome are removed.
This is coherent with the fact that the neural architectures of c-elegans and humans are quite different. While the former is made up of fewer neurons precisely developed to implement specific functions, neurons in the human brain are thought to be more general-purpose with much greater redundancy in structural connectivity.


The co-authorship networks have shown an even stronger triangular organization with a majority of edges that obey the strongest ultra-metric triangle inequality, as seen in Figure \ref{fig:netscience}, 
(\S \ref{section_results}). 
In contrast, all the other social networks studied are semi-triangular with a majority of edges that break the metric (and ultra-metric) triangle inequality. 
From an information transmission viewpoint, the most likely transmission between any two people in the social contact networks is via an indirect path, whereas in the co-authorship networks it is via a direct link.
The reason for the difference is likely because of how the social ties are formed, and specifically the social investment inherent in each case. Social contact networks are built from observations of the time people spend in the vicinity of one another in a particular context (e.g., school). Therefore, there are many fortuitous interactions with small social investment. In contrast, co-authoring scientific articles constitutes a substantial interaction with less chance of fortuitous interactions. Indeed, the density $\delta(D)$ of the co-authorship networks is much smaller.
This suggests that the metric backbone of social contact networks is a more accurate representation of a true, invested social structure of the people involved, as we explore in forthcoming work \cite{epibackbone_working}.

Similar reasoning also suggests that neural connections in C-elegans, given the small number of neurons, constitute a substantial functional investment that leads to a more triangular network organization characterized by a large metric backbone as shown in Figure \ref{fig:c-elegans}. In contrast, given the large number of neurons in the human brain, structural connections may represent a smaller functional investment leading to an overwhelmingly semi-triangular organization characterized by a very small metric backbone as shown in Figure \ref{fig:hagmann-66-998}. In a sense, the latter can afford connection redundancy for greater robustness but the former cannot. 
In forthcoming work we study the relationship between triangular organization and structural cost in the human brain connectome \cite{neurobackbone_working}.

%


While edges off the metric backbone do not contribute to shortest paths and have null betweeness centrality, the distribution of their semi-metric distortion ($s_{ij}^m$) is meaningful.
When a distance graph is almost metric (e.g., \textit{HCN-Physical} in Table \ref{tableSM}), there is very little semi-metric distortion and most edges either obey the triangle inequality or barely break it as seen in Figures \ref{fig:hagmann-66-998}D and \ref{fig:hagmann-66-998}E. However, most real-world networks we have observed display a wide variation of semi-metric distortion as seen in Figures \ref{fig:US-Airports-2006}B and \ref{fig:hagmann-66-998}C. This means that the shortest path length between many pairs of nodes can be substantially smaller via indirect paths in the metric backbone than via a direct connection---over 100 times smaller for the human brain regions in the HCN, and over 10,000 times smaller for some airport pairs in the U.S. Airport network, as seen in Figures \ref{fig:US-Airports-2006}B and \ref{fig:hagmann-66-998}C.
In this sense, the distance backbone functions metaphorically as a wormhole, greatly and indirectly shortening the (semi-triangular) distance between many pairs of nodes.
%
%
%
Interestingly, edges that substantially break the triangle inequality have been shown to be relevant to associations in network inference, link prediction, recommender systems, and even segmentation of healthy and diseased human-brain phenotypes \cite{simas_rocha_2014_MWS,correia2016monitoring,simas2015semi}.  
This suggests that such edges denote associations that are more likely to be stronger in the future or with more data. Given the very strong indirect connection (via the backbone), in time, diffusion of information is likely to more strongly, directly associate nodes that are presently related by very semi-metric edges, evolving the networks toward a more triangular organization overall \cite{Rocha2002}. 
%


The analysis of the triangular geometry thus provides a more nuanced characterization of edges in weighted graphs than does betweeness centrality. In the case of the metric closure (which uses the most traditional path length operation $g \equiv +$), all semi-metric edges have null betweeness centrality, but their semi-metric distortion varies widely and meaningfully, as we have shown.
Furthermore, our approach in effect generalizes betweeness centrality to include the (triangular) edges that do contribute to shortest paths. By considering more or less stringent length operations $g$, the resulting distance backbone reveals the most important edges for shortest paths. For instance, edges on the ultra-metric backbone contribute more to shortest paths than do those that are only on backbones computed with less stringent triangular constraints.
Therefore, the concept of triangular edges is more general than betweeness centrality, since it considers all possible shortest-path length measures. 
Moreover, it does not require an algorithm to compute all and how many shortest paths pass through a given edge, but simply the length of the shortest path between the nodes of each edge (the APSP.)
All together, the distance backbone analysis provides a principled graph reduction technique and also allows a finer characterization of how the triangular geometry of real-world networks affects shortest paths and thus more generally, information transmission in complex multivariate systems.


%



\section{Materials, methods, and data}
\label{section_methods}

\subsection{Network Data}
\label{section_network_data}

To allow for easier comparisons, we used only the largest connected component of each distance graph described below. In every case, the set of nodes of $D(X)$ in Table \ref{tableSM}, $|X|$, denotes the number of nodes in the largest connected component. Therefore, for every network, $D(X)$ and $B^g(X)$ are connected graphs, and $D^{T,g}$ is a complete graph. 
We used the following networks, which we converted to distance graphs from proximity/strength versions via eq. \ref{eq_iso_map}, except where noted.

%
%


\vspace{2mm} 

\noindent \textbf{Technological}
\begin{itemize}
\item The \textit{U.S.-airports-500} network is a distance graph of the 500 busiest commercial airports in the United States. An edge exists between two airport-nodes if a flight was scheduled between them in 2002.
Edge weights are a normalized measure of traffic (available airplane seats) between airports \cite{colizza2007reaction}.
\item \textit{U.S.-airports-2006} is the domestic nonstop segment of the U.S. airport transportation system for the year 2006, retrieved from \url{http://www.transtats.bts.gov}. Edge weights are 
%
the normalized average number of passengers traveling between two airport-nodes. Airports in the American Samoa, Guam, Northern Marianas, and Trust Territories of the Pacific Islands have been removed. 
%
%
This network is a reconstruction of the one used by Serrano et al.\cite{serrano}.
\end{itemize}

%


%
%

\noindent \textbf{Biological}
\begin{itemize}

\item \textit{C-elegans} is the \textit{Caenorhabditis elegans} worm (\textit{C.elegans}) neural network. Each node is a neuron and edge weight is a normalized measure of the number of synapses or gap junctions between two neuron-nodes \cite{WATTS1998}.

\item Four distinct  \textit{Human Connectome} networks (HCN) are obtained from the group averages of five human participants in a study to map the human brain \cite{hagmann2008mapping}. The edge weights of the 
\textit{HCN-Coarse}
and 
\textit{HCN-Fine}
networks are a normalized measure of the volume of cortico-cortical axonal pathways between human brain regions (nodes) obtained via diffusion spectrum imaging (DSI).
The edge weights correspond to the number of tractography streamlines linking two region-nodes, divided by the combined volume of the two regions.
The two networks are built from the same data but \textit{HCN-Coarse} is a coarse-grained representation of \textit{HCN-Fine} that results in larger network density (Table \ref{tableSM}.
\textit{HCN-Coarse} is based on a parcellation of 66 regions of interest (ROI), whereas \textit{HCN-Fine} is based on a finer structural connectivity matrix of 998 nodes for which we kept the 989 nodes that form its largest connected component.
The edge weights of the \textit{HCN-Physical} network denote the physical length of each connection in the 998 node structural connectivity matrix---in general a little longer than the euclidean distance since connections are curved in the interior of the human brain. In this case, the distance graph uses these edge weights directly, rather than via eq.\ref{eq_iso_map}.
Finally, \textit{HCN-fMRI} is the resting-state functional magnetic resonance imaging matrix of the 998 regions. The original data is in a correlation matrix which for edge length we convert to a proximity measure by considering the absolute value of positive and negative correlations.

\item \textit{Enterocyte GRN} is a gene interaction network retrieved from STRING \cite{StringDB} for genes expressed in insect intestinal cells. Edge-weights between gene-nodes denote a confidence score that the genes are known to interact based on experimentally and computationally derived evidence sources \cite{enterocyte_working}.

\end{itemize}

\noindent \textbf{Social}
\begin{itemize}
%
%
\item \textit{High-school} is a face-to-face contact network of a typical high school day in the United States (unspecified city/state) gathered in 2010. Data were collected from students, teachers, and staff via wireless ``sensor network motes'' (TelosB; Crossbow Technologies Inc.), with data covering 94\% of the entire school population \cite{Salathe2010highresolution}. Temporal contacts between pairs of individuals/nodes have been summed and normalized by their total interactions to calculate edge-weights and reduce the temporal network to a weighted graph.

\item \textit{Primary-school} is also a contact network. Data are between 242 individuals in a primary school in Lyon, France---232 students and 10 teachers. The school comprises 5 grades, each grade with two classes, each class with an assigned room and an assigned teacher. Lunches are served in a common canteen, and a shared playground is located outside the main building. As the playground and the canteen do not have enough capacity to host all the students at the same time, only two or three classes have concurrent breaks, and lunches are taken in two consecutive turns. Contacts were recorded using active RFID devices embedded in unobtrusive, wearable badges. The badges exchanged multi-channel, bi-directional radio communication \cite{sociopatterns2011primaryschool}. Similar to the high-school data set, temporal contacts were reduced to a weighted graph of interactions.

\item \textit{Freeman} is a network built from a data set collected in 1978 that contains the frequency of message exchange among 32 researchers working on social network analysis via an electronic communication tool \cite{Freeman:1979}.

\item \textit{cond-mat} is a weighted co-authorship network among scientists who had posted pre-prints on the condensed matter e-print archive between January 1\textsuperscript{st}, 1995 and December 31\textsuperscript{st}, 1999 \cite{Newman2001}. Edge weights are a normalized measure of volume of co-authored articles between a pair of scientist-nodes.

\item \textit{cond-mat-2003} is an updated version of the cond-mat network that includes all preprints posted between Jan 1\textsuperscript{st}, 1995 and June 30\textsuperscript{th}, 2003 \cite{Newman2001}.

\item \textit{net-science} is a co-authorship network of scientists working on network theory and experiment as originally compiled by M. Newman \cite{newman2006community}. Edge weights are a normalized measure of volume of co-authored articles between a pair of scientist-nodes.

\end{itemize}

\noindent \textbf{Social}
\begin{itemize}
%
%

\item \emph{Wikipedia-Fact} is a semantic proximity network of Wikipedia concepts used for large-scale automatic fact-checking \cite{ciampaglia2015computational}. 

\item The \textit{MyLib} networks are from scientific articles and user profiles of the \textit{MyLibrary} Recommender system at the Los Alamos National Laboratory's digital library\cite{Rocha2005}. 
\emph{MyLib-Keywords} is a semantic proximity network of the 500 most frequent keywords in scientific articles accessed and edge weights are co-occurrences in user profiles. \emph{MyLib-Users} is a network of the scientists who utilized the MyLibrary Recommender system and built from co-access patterns to academic journals.  
\emph{MyLib-Journals} is a network of ISSN (academic journals) and built from co-occurrence in user profiles.

\item \textit{Instagram depression} is a co-mention network built from complete Instagram timelines of users who had mentioned at least one drug known to treat depression. Nodes denote terms (i.e., drugs or medical terms) present in the timelines and edges connect terms that were mentioned within a seven day window \cite{correia2016monitoring, correia2019thesis}.
\end{itemize}

\subsection{Computational Methods and Tools}
\label{section_computational}

Network backbones have been computed using the `distanceclosure' python package developed by the authors and freely available at \url{https://github.com/rionbr/distanceclosure}.
Network plots have been rendered with Gephi \cite{Gephi2009}.
Distortion distributions have been fitted using the `powerlaw' python package \cite{Clauset2009powerlaw,Alstott2014powerlaw}.

\bibliographystyle{main}
\bibliography{gbib}



\noindent \textbf{Acknowledgements:}

%
The authors thank Olaf Sporns and Giovanni Ciampaglia for sharing data used in the analysis and for many useful conversations to understand the results. We are also very thankful to Deborah Rocha for very thorough line editing.

\noindent \textbf{Funding:} RBC was funded by the CAPES Foundation (grant 18668127) and Fundação para a Ciência e a Tecnologia (grant PTDC/MEC-AND/30221/2017).
LMR was partially funded by the National Institutes of Health, National Library of Medicine Program, grant 1R01LM012832-01, by a Fulbright Commission fellowship, and by NSF-NRT grant 1735095 ``Interdisciplinary Training in Complex Networks and Systems.''
The funders had no role in study design, data collection and analysis, decision to publish, or preparation of the manuscript.
\\
\noindent \textbf{Author Contributions} TS and LMR conceived the research. LMR and RBC designed and conducted the analysis. RBC developed the Python code, performed all computations and visualizations. LMR wrote the manuscript with contributions from TS and RBC.\\
\noindent \textbf{Competing Interests} The authors declare that they have no competing financial interests.\\
\noindent \textbf{Data and materials availability:} Additional data and materials are available online.

\newpage

\section*{Supplementary information}

\subsection*{Theorems and Proofs}
\label{section_SI_Proofs}

\begin{theoremSI}[Backbone of non-weighted graphs]
    If $D(X)$ is a standard, non-weighted graph, then its distance backbone for any td-norm $g$ is the entire graph: $B^g(X) \equiv D(X).$
\end{theoremSI}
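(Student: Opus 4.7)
The plan is to show that every edge present in $D(X)$ satisfies $d_{ij} = d^{T,g}_{ij}$, so by eq.~\ref{eq_backbone} every such edge lies in $B^g(X)$ with the original weight; since $B^g(X)$ is always a subgraph of $D(X)$ by construction, the two must then coincide.

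In a non-weighted graph every existing edge carries the same finite weight $c$, while non-edges have weight $+\infty$ (under the simple isomorphism $\varphi$ of eq.~\ref{eq_iso_map}, this value is $c = 0$, but the argument below goes through for any fixed finite $c$). For an edge $d_{ij} = c$, the direct link is itself a path from $x_i$ to $x_j$, so immediately $d^{T,g}_{ij} \leq c$. For the reverse inequality I would first establish an auxiliary fact about any td-norm $g$: that $g(a, b) \geq \max(a, b)$ for all $a, b \in [0, +\infty]$. This is an immediate consequence of the monoid axioms recalled in \S\ref{section_isomorphism}---the identity $g(a, 0) = a$ together with monotonicity yields $g(a, b) \geq g(a, 0) = a$, and symmetrically $g(a, b) \geq b$. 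A short induction on path length using associativity then extends this bound to arbitrary paths, so the $g$-length (eq.~\ref{eq_length}) of any $x_i \to x_j$ walk is bounded below by the maximum edge weight it traverses.

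Applied to the non-weighted setting, every edge weight on any indirect path is either $c$ or $+\infty$, hence the path's $g$-length is at least $c$ (and is $+\infty$ as soon as the path crosses a non-edge). Therefore $d^{T,g}_{ij} \geq c$, and combining with the opposite direction gives $d^{T,g}_{ij} = c = d_{ij}$. Eq.~\ref{eq_backbone} then places the edge in $B^g(X)$ with its original weight, completing the argument.

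I do not expect any genuine obstacle: the whole proof rests on the monoid axioms for $g$ and on the fact that a non-weighted graph has a single finite edge weight. The only mildly delicate step is propagating the two-argument bound $g(a,b) \geq \max(a,b)$ to paths of arbitrary length, but this is a one-line induction using associativity and monotonicity.
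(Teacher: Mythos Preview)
Your argument is correct. It is close in spirit to the paper's proof---both rest on the td-norm identity element $0$ and monotonicity---but the paper takes a shorter route by adopting the convention that a non-weighted edge has $d_{ij}=0$; since td-norm values lie in $[0,+\infty]$, no indirect path length can fall below $0$, and the generalized triangle inequality is vacuously satisfied for every edge. Your version instead proves the auxiliary bound $g(a,b)\geq\max(a,b)$ (which is exactly eq.~\ref{eq_TDNorm_order} in the paper, stated there but not invoked in this proof) and uses it to handle an arbitrary constant edge weight $c$. The gain is a slightly more general statement---the conclusion holds whenever all finite edge weights coincide, not only when they equal $0$---at the cost of the extra induction step; the paper's convention makes that step unnecessary.
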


\begin{proof}
    The proof is straightforward from the definition of td-norm $g$ via its isomorphic t-norm $\wedge$
(\S \ref{section_back}).
When graph $D(X)$ is non-weighted, it means that there is no distinguishing characteristic for the weights. In our framework this means that when an edge exists between two nodes $x_i$ and $x_j$, they are considered to be maximally associated: $d_{ij}=0$. Conversely, if there is no edge, the two variables are minimally associated: $d_{ij}=+\infty$\footnote{In the isomorphic space, all connected nodes have maximum proximity, $p_{ij}=1$, and when there is no edge between $x_i$ and $x_j$ we have minimum proximity, $p_{ij}=0$.}.
Because $0$ is the identity element of any td-norm $g$
(\S \ref{section_back}),
the generalized triangle inequality (eq. \ref{eq_gen_triang}) cannot be broken for any edge with $d_{ij}=0$.
\end{proof}

\begin{theoremSI}[Backbone Sufficiency]
    Given a distance graph $D(X)$ defined on (node) variable set $X$, its shortest-path distance closure defined by any td-norm $g$ is equivalent to the same closure of its distance backbone subgraph:
    $D^{T,g}(X) \equiv B^{T,g} (X)$.
\end{theoremSI}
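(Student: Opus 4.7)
The plan is to proceed by strong induction on $k$, the number of semi-triangular edges in $D(X)$. The base case $k=0$ is immediate: every edge is triangular, so the backbone $B^g$ coincides with $D$ and the two distance closures are trivially equal.

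For the inductive step, assuming the theorem for all graphs with fewer than $k$ semi-triangular edges, I would single out any semi-triangular edge $(u,v)$ of $D$ and form $D'$ by deleting it (setting $d'_{uv}=+\infty$). The argument reduces to three auxiliary claims: (i) the closures agree, $(D')^{T,g} \equiv D^{T,g}$; (ii) $D'$ has exactly $k-1$ semi-triangular edges; and (iii) the distance backbones agree, $B^g(D') \equiv B^g(D)$. Given these, the inductive hypothesis applied to $D'$ yields $B^g(D')^{T,g} \equiv (D')^{T,g}$, which combined with (i) and (iii) gives $B^g(D)^{T,g} \equiv D^{T,g}$, as required.

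The crux is claim (i). The inequality $(D')^{T,g}_{ij} \geq D^{T,g}_{ij}$ follows from $D' \subseteq D$, since removing an edge can only lengthen shortest paths. For the reverse, I would use that since $(u,v)$ is semi-triangular, by definition there exists a $u$-to-$v$ path $Q$ in $D$ with $\ell(Q) = d^{T,g}_{uv} < d_{uv}$. A small but essential observation is that $Q$ cannot itself traverse the edge $(u,v)$: because $0$ is the identity of $g$ and monotonicity gives $g(a,b) \geq g(a,0) = a$ for all $b \in [0,+\infty]$, any walk containing $(u,v)$ has length at least $d_{uv}$. Hence $Q$ lies in $D'$. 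Given any shortest $i$-to-$j$ path in $D$, if it avoids $(u,v)$ it already sits in $D'$; if it uses $(u,v)$, substituting $Q$ yields a walk in $D'$ of length at most the original, by weak monotonicity of $g$. Either way $(D')^{T,g}_{ij} \leq D^{T,g}_{ij}$. Claim (ii) then follows because for each remaining edge $e \neq (u,v)$ the triangularity condition $d_e = D^{T,g}_e$ is preserved, and $(u,v)$ itself is no longer present; claim (iii) is immediate since $(u,v) \notin B^g(D)$ to begin with.

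The main subtlety I foresee is the case of td-norms lacking strict monotonicity---most notably $g \equiv \max$ for the ultra-metric closure. A direct ``replace a semi-triangular edge in a shortest path $P^*$ and derive a contradiction'' argument fails there, since replacing a non-bottleneck edge need not strictly decrease the maximum. Structuring the induction at the level of the graph (edge deletion) rather than at the level of paths (edge replacement) sidesteps this entirely: only weak monotonicity of $g$ and the identity $g(a,0)=a$ are invoked, both of which are built into the td-norm axioms of \S\ref{section_back}, so the argument uniformly covers every admissible $g$.
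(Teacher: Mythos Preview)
Your proof is correct and takes a genuinely different route from the paper's. The paper argues directly: a semi-triangular edge $d_{ij}$ can never be selected by the closure for the pair $(x_i,x_j)$ (some indirect path is strictly shorter), hence it is not used in \emph{any} shortest path passing through $x_i$ and $x_j$, so all semi-triangular edges may be dropped at once. Your induction instead deletes one semi-triangular edge at a time and verifies that the closure, the backbone, and the set of remaining semi-triangular edges are all preserved.

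What your approach buys is rigor at the one point the paper glosses over: when a semi-triangular edge is replaced by its shorter witness path $Q$, that path may itself contain other semi-triangular edges, so ``each semi-triangular edge is individually dispensable'' does not immediately yield ``all are simultaneously dispensable.'' Your one-at-a-time induction, together with claim~(ii) showing the semi-triangular count drops by exactly one, makes this step explicit. You also correctly isolate that only weak monotonicity and the identity $g(a,0)=a$ are needed, which covers the ultra-metric case $g\equiv\max$ without special pleading. The paper's version, in turn, is shorter and conveys the geometric intuition more directly; both rest on the same td-norm axioms, so the difference is one of packaging rather than of underlying idea.
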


\begin{proof}
    The proof of this theorem is rather trivial. If an edge $d_{ij}$ of $D(X)$ breaks the generalized triangle inequality (eq. \ref{eq_gen_triang}), we have $d_{ij} > g (d^T_{ik}, d^T_{kj})$.
    Therefore, $d_{ij}$ is not an edge of the distance backbone ($b_{ij} = +\infty$), and there must exist at least one indirect path between $x_i$ and $x_j$ via a set other nodes $x_k \in K \subset X$ such that $\ell_{ij} < d_{ij}$, where $\ell_{ij}$ is the length of the indirect path given by eq. \ref{eq_length}.
    Since the closure computation (eq. \ref{eq_TC1} via isomorphism of eq. \ref{eq_isomorphism_formulae}) selects the shortest path between any pair of nodes ($f \equiv \min$), it cannot select the direct edge $d_{ij}$ for the shortest distance between $x_i$ and $x_j$, but rather the indirect path with smallest length: $d^T_{ij} = \min_K (\ell_{ij})$. Therefore,  $d_{ij}$ is not used to compute $d^T_{ij}$, nor the length of any shortest path that goes through $x_i$ and $x_j$---which must use $d^T_{ij}$ rather than $d_{ij}$, since $d^T_{ij} < d_{ij}$.
    Finally, if $d_{ij}$ does not break the generalized triangle inequality (eq. \ref{eq_gen_triang}), then it is an edge of the distance backbone and is sufficient to compute the smallest path length between $x_i$ and $x_j$, as there cannot be a shorter indirect path between them per eq. \ref{eq_gen_triang}: $b_{ij} = d_{ij} = d^T_{ij}$.
\end{proof}

\begin{corollarySI} [Backbone connectivity]
    Given a connected distance graph $D(X)$, its distance backbone graph $B^g(X)$ is also a connected graph for any td-norm $g$.
\end{corollarySI}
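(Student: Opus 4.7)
The plan is to derive connectivity of $B^g(X)$ directly from the Backbone Sufficiency Theorem (Theorem \ref{th_backbone_is_sufficient}) rather than arguing edge-by-edge. The observation is that for the td-norms under consideration, distance closure produces a finite-weighted complete graph precisely on the connected components of the input. So connectivity of $D(X)$ is equivalent to every entry of $D^{T,g}$ being finite, and the same equivalence applies to $B^g(X)$ and $B^{T,g}$.

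First, I would record the elementary fact that for $f\equiv \min$ and any td-norm $g$, the closure entry $d^{T,g}_{ij}$ is finite if and only if there exists an indirect (non-repeating) path from $x_i$ to $x_j$ in the underlying graph: indeed, $\ell_{ij}=g(d_{ik_1},\ldots,d_{k_m j})$ is finite if and only if every edge along the path is finite (since $g$ is a td-norm, hence monotone with identity $0$ and the drastic upper bound, and returns $+\infty$ as soon as one argument does, unless the other is $0$; and the minimum over paths is finite iff at least one such path exists). Hence connectivity of a graph is equivalent to all off-diagonal entries of its shortest-path distance closure being finite.

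Next, assume $D(X)$ is connected. Then every $d^{T,g}_{ij}$ is finite. By Theorem \ref{th_backbone_is_sufficient}, $B^{T,g}(X)\equiv D^{T,g}(X)$, so every $b^{T,g}_{ij}$ is also finite. Applying the equivalence in the previous step to the graph $B^g(X)$, this forces $B^g(X)$ to be connected, which is the desired conclusion.

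The only step that requires a little care is the first one: making precise that a td-norm $g$ cannot manufacture a finite path length from a sequence of edge weights that includes a genuinely infinite weight (i.e.\ a missing edge). This follows from monotonicity together with the identity-element property ($g(a,0)=a$, $g(a,+\infty)=+\infty$ for $a>0$), which are the defining properties of td-norms recalled in \S\ref{section_isomorphism}, so it should be essentially a one-line remark rather than a real obstacle. With that in hand the corollary is an immediate consequence of Theorem \ref{th_backbone_is_sufficient}.
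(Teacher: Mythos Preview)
Your proposal is correct and follows essentially the same approach as the paper: both argue that connectivity of $D(X)$ makes $D^{T,g}(X)$ a complete (finite-weighted) graph, then invoke Theorem~\ref{th_backbone_is_sufficient} to conclude $B^{T,g}(X)=D^{T,g}(X)$ is also complete, forcing $B^g(X)$ to be connected. Your version is slightly more explicit about why finite closure entries are equivalent to connectivity (the td-norm remark), but the logical skeleton is identical.
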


\begin{proof}
    If graph $D(X)$ is connected, then there is a path between every pair of nodes $x_i$ and $x_j$ in graph, and its shortest-path distance closure with any td-norm $g$,  $D^{T,g}(X)$, is a complete (fully connected) graph.
    Since, per Theorem \ref{th_backbone_is_sufficient}, the closure of the backbone graph $B^g$ is sufficient to compute the same (complete) $D^{T,g}(X)$, the backbone graph must be connected as well.
\end{proof}

\begin{corollarySI} [Backbone Contains All Bridges]
    Given a distance graph $D(X)$, all its bridge edges are included in its distance backbone graph $B^g(X)$ for any td-norm $g$.
\end{corollarySI}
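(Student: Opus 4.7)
The plan is to proceed by contradiction, leveraging the backbone definition (eq. \ref{eq_backbone}) together with the defining property of a bridge. Let $d_{ij}$ be a bridge of $D(X)$, meaning the removal of the edge between $x_i$ and $x_j$ disconnects the graph; in particular, after removal, there is no path whatsoever between $x_i$ and $x_j$ in $D(X) \setminus \{d_{ij}\}$. Suppose for contradiction that $d_{ij} \notin B^g(X)$ for some td-norm $g$. By eq. \ref{eq_backbone}, this forces $d_{ij} > d^{T,g}_{ij}$.

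Next I would invoke the construction of the shortest-path distance closure with $f \equiv \min$: the value $d^{T,g}_{ij}$ is the minimum over all paths between $x_i$ and $x_j$ of the length $\ell_{ij} = g(d_{ik_1}, d_{k_1 k_2}, \dots, d_{k_m j})$. Since $d^{T,g}_{ij} < d_{ij}$, the direct (one-edge) path of length $d_{ij}$ cannot be the minimizer, so the minimum must be attained by some indirect, non-repeating path through intermediate nodes $x_{k_1}, \dots, x_{k_m}$ with $m \geq 1$. Crucially, because this path is simple and runs between the endpoints $x_i$ and $x_j$ via distinct interior nodes, it cannot contain the edge $d_{ij}$ itself (that would require revisiting $x_i$ or $x_j$). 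Hence the indirect path survives in $D(X) \setminus \{d_{ij}\}$, and it witnesses connectivity between $x_i$ and $x_j$ after removal of $d_{ij}$.

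This contradicts the assumption that $d_{ij}$ is a bridge, so $d_{ij} \in B^g(X)$. Since the argument is uniform in the td-norm $g$ (it uses only that the closure minimizes over indirect paths, a property guaranteed by $f \equiv \min$), the conclusion holds for every $g$. The main obstacle I anticipate is making rigorous the claim that the indirect path realizing $d^{T,g}_{ij}$ can be chosen to avoid $d_{ij}$ as an interior edge; this is handled cleanly by restricting attention to simple (non-repeating) paths, which is legitimate since any shortest path in a closure with $f \equiv \min$ can be assumed simple without increasing its $g$-length (using commutativity and associativity of $g$ together with the identity $g(z, 0) = z$ to prune cycles, if needed).
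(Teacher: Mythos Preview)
Your argument is correct and self-contained: you show directly that any edge excluded from the backbone must be bypassed by an alternative simple path in $D(X)$, which immediately rules out its being a bridge. The paper's proof is different in structure: it simply invokes Theorem~\ref{th_backbone_is_sufficient} and Corollary~\ref{th_backbone_is_connected}, arguing that a missing bridge would force $B^g(X)$ to have more connected components than $D(X)$, hence $B^{T,g}(X)$ could not coincide with $D^{T,g}(X)$. The paper's route is shorter because it recycles the sufficiency theorem, but it is slightly less tidy since Corollary~\ref{th_backbone_is_connected} is stated only for connected $D(X)$ and one must tacitly apply it componentwise. Your route is more elementary---it uses nothing beyond the backbone definition and the combinatorial meaning of a bridge---and the care you take in justifying that the minimizing path can be chosen simple (so that it avoids $d_{ij}$) is exactly the point that makes the direct argument airtight.
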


\begin{proof}
    A bridge is an edge whose deletion increases the graph's number of connected components.
    Therefore, if a bridge were not present on the backbone graph $B^g(X)$,  Theorem \ref{th_backbone_is_sufficient} and Corollary \ref{th_backbone_is_connected} would be false.
\end{proof}

\newpage

\subsection*{Additional Tables}
\label{section_SI_Tables}

\begin{table*}[!th]
    \centering
    \scalebox{0.8}{
    \begin{tabular}{c|llll}
    \toprule
    {} &  {U.S.-airports-500} &  {HCN-Coarse} & {C-Elegans} & {MyLib-Keywords}  \\
    \midrule
    $D(X)$    & 0.6175  & 0.7165  &  0.2924  & 0.9301  \\
    $D^m (X)$ & 0.1671  & 0.1318  &  0.0745  & 0.1919  \\
    $D^u (X)$ & 0.0     & 0.0     &  0.0     & 0.0     \\
    \bottomrule
    \end{tabular}
  }
  \caption{Watts and Strogatz (average) clustering coefficient for four distance graphs and their metric and ultra-metric backbones. Note that the clustering coefficient treats graphs as unweighted; that is, edge weights are assumed to be 1.}
  \label{tableClustering}
\end{table*}

\newpage

\subsection*{Additional Figures}
\label{section_SI_Figures}

\begin{figure*}[ht!]
    \centering
    \includegraphics[width=\textwidth]{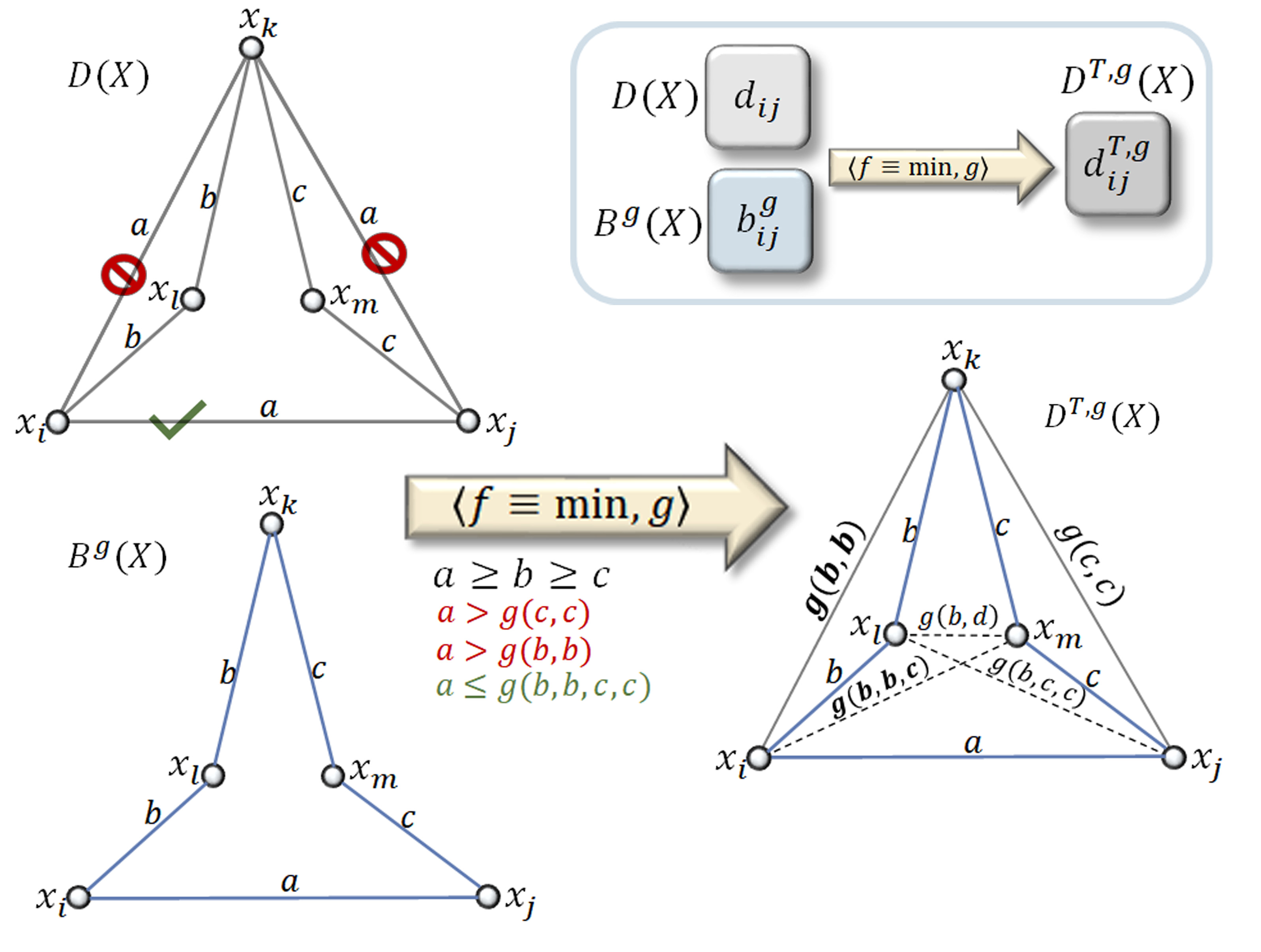}
    \caption{
        \textbf{Distance Backbone Example 2.}
        \textbf{Top, right}. Schematic of the distance closure $D^{T,g}(X)$ obtained from either the original distance graph $D(X)$ or its distance backbone $B^g(X)$.
        \textbf{Top, left}. Example distance graph of 5 nodes with edge distance weights constrained by $a\geq b \geq c$ and by $a > g(c,c)$ and $a > g(b,b)$ which break the generalized triangle inequality (eq. \ref{eq_gen_triang}) for nodes $\{x_j, x_k, x_m\}$ and $\{x_i, x_k, x_l\}$, as well as constraint $a \leq g(b,b,c,c)$ which enforces the inequality for nodes $\{x_i, x_k, x_j\}$, in indirect paths via nodes $\{x_l, x_m\}$ after closure.
        \textbf{Bottom}. The distance backbone graph $B^g(X)$ (left) and the distance closure graph $D^{T,g}(X)$ (right) for any td-norm $g$ given the edge weight constraints considered and td-norm properties; backbone (triangular) edges in blue, semi-triangular edges in gray, and (indirect) edges that do not exist in $D(X)$ appear in dashed gray in $D^{T,g}(X)$; bold edge weights denote changes from the example in Figure \ref{fig:distance_backbone}.
    }
    \label{fig:distance_backbone_2}
\end{figure*}

\begin{figure*}[ht!]
    \centering
    \includegraphics[width=\textwidth]{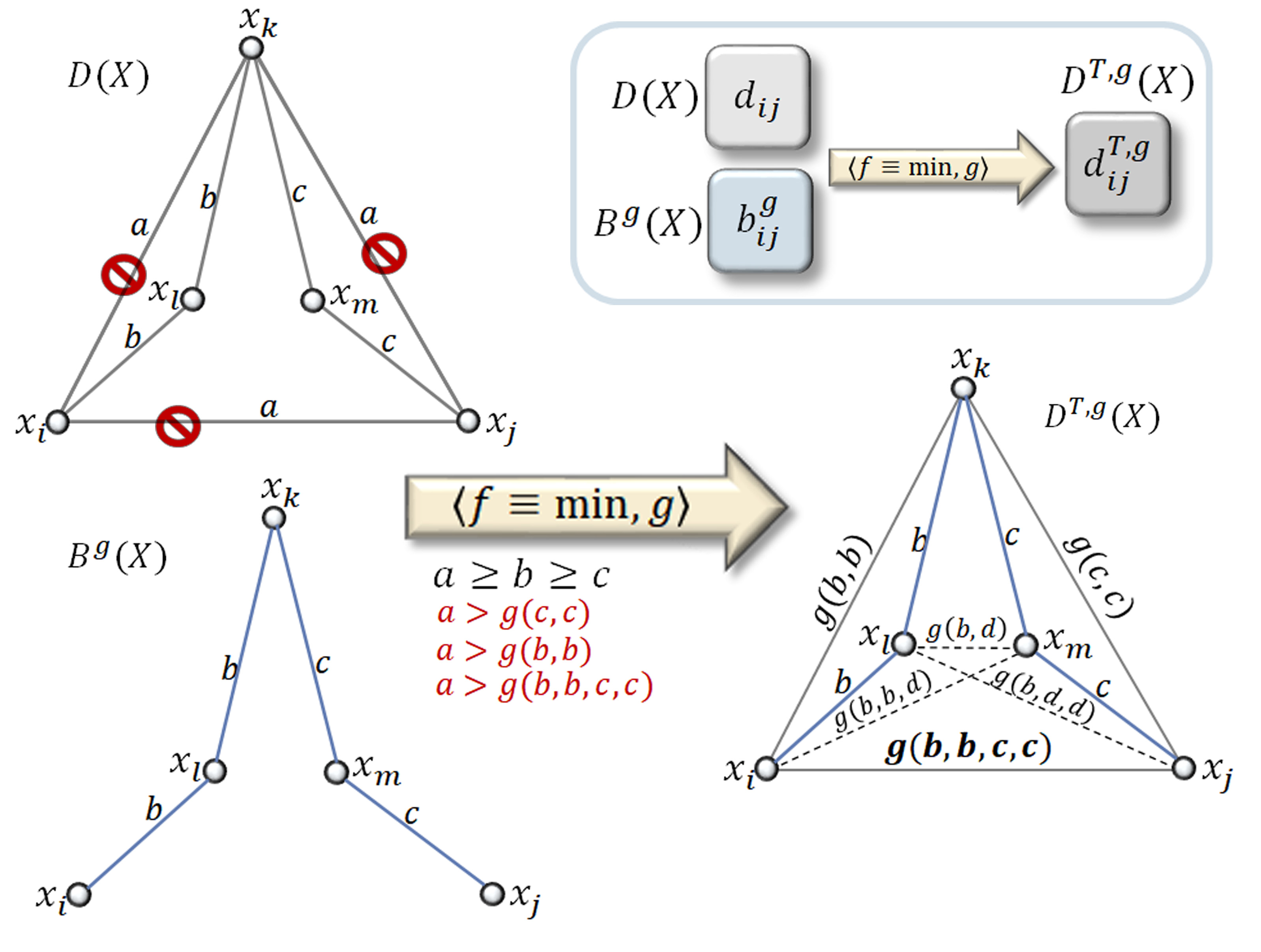}
    \caption{
        \textbf{Distance Backbone Example 3.}
        \textbf{Top, right}. Schematic of the distance closure $D^{T,g}(X)$ obtained from either the original distance graph $D(X)$ or its distance backbone $B^g(X)$.
        \textbf{Top, left}. Example distance graph of 5 nodes with edge distance weights constrained by $a\geq b \geq c$ and by $a > g(c,c)$, $a > g(b,b)$, and  $a \leq g(b,b,c,c)$ which break the generalized triangle inequality (eq. \ref{eq_gen_triang}) for nodes $\{x_j, x_k, x_m\}$, $\{x_i, x_k, x_l\}$, as well as $\{x_i, x_k, x_j\}$, the latter in indirect paths via nodes $\{x_l, x_m\}$ after closure.
        \textbf{Bottom}. The distance backbone graph $B^g(X)$ (left) and the distance closure graph $D^{T,g}(X)$ (right) for any td-norm $g$ given the edge weight constraints considered and td-norm properties; backbone (triangular) edges in blue, semi-triangular edges in gray, and (indirect) edges that do not exist in $D(X)$ appear in dashed gray in $D^{T,g}(X)$; bold edge weights denote changes from the example in Figure \ref{fig:distance_backbone_2}.
    }
    \label{fig:distance_backbone_3}
\end{figure*}

\begin{figure*}[ht!]
    \centering
    \includegraphics[width=\textwidth]{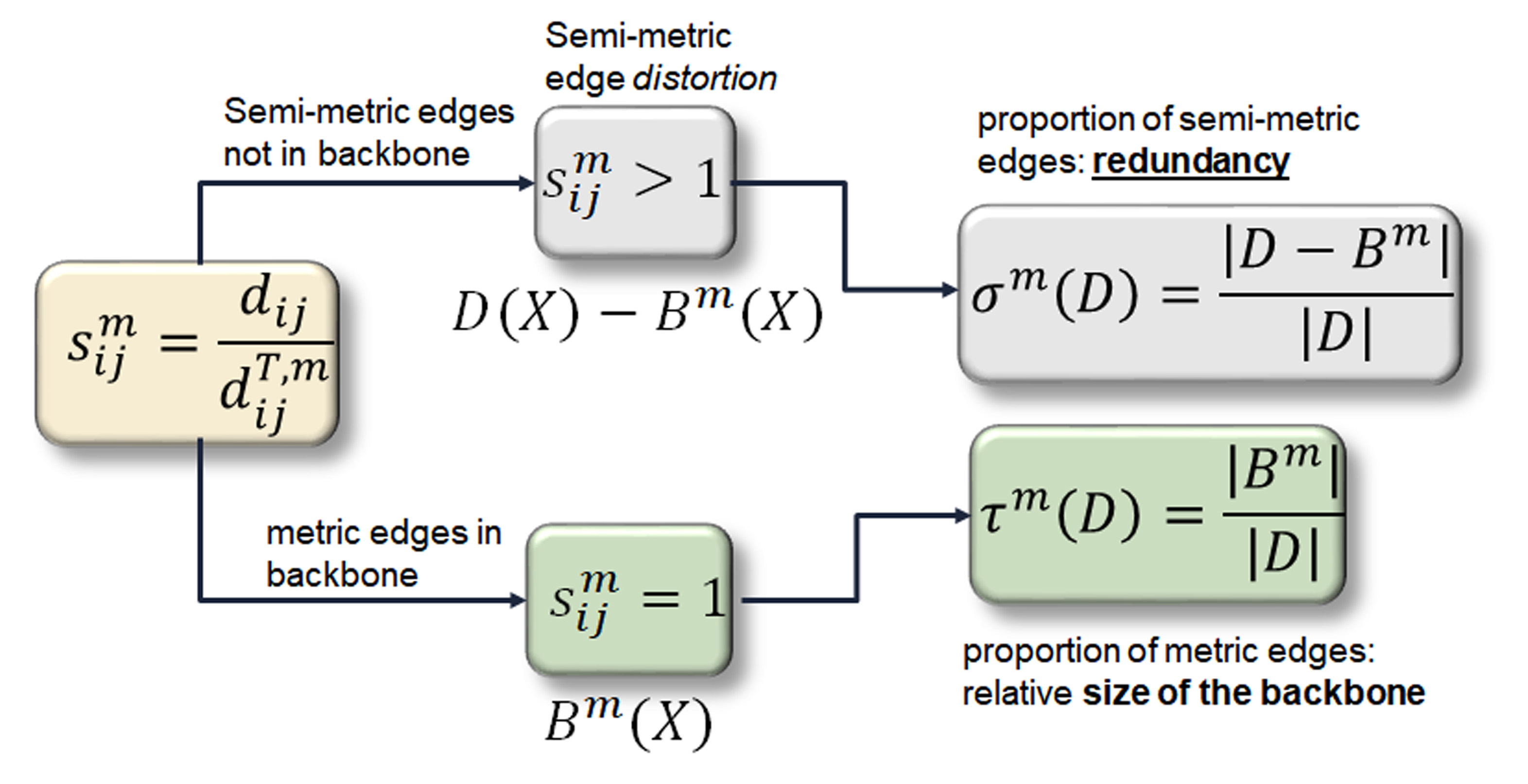}
    \caption{
        \textbf{Semi-metric measures.}
            Parsing of two types of distance graph edges, metric ($s^m_{ij}=1$) and semi-metric ($s^m_{ij} > 1$), from which graph-level measures of metric backbone size ($\tau^m(D)$) and (semi-metric) redundancy ($\sigma^m(D)$) derive, respectively.
            Measures apply only to the metric backbone that derives from the metric closure instantiated by td-norm $g \equiv  +$.
        }
    \label{fig:measures_metric}
\end{figure*}

\begin{figure*}[ht!]
    \centering
    \includegraphics[width=\textwidth]{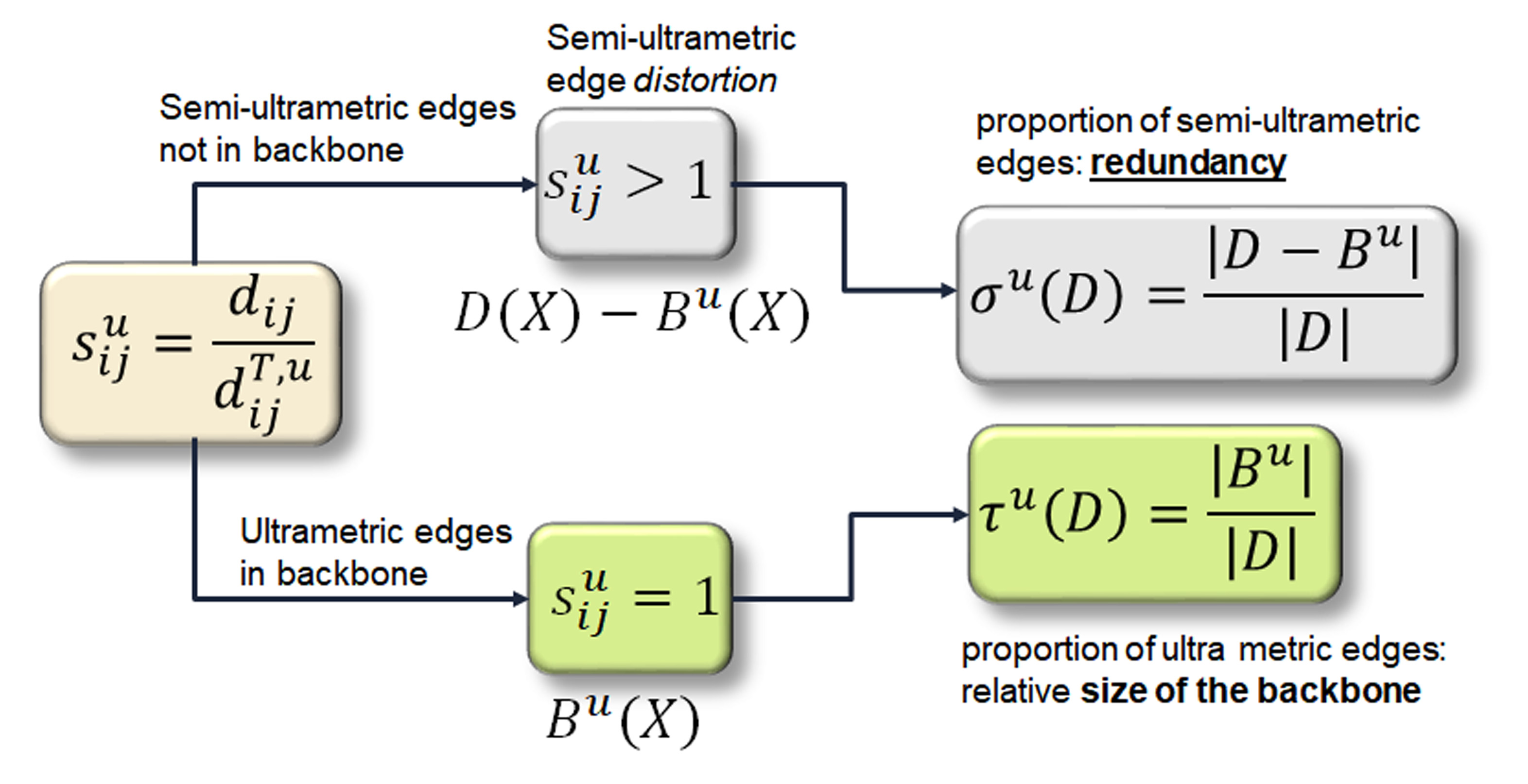}
    \caption{
        \textbf{Semi-ultrametric measures.}
        Parsing of two types of distance graph edges, ultrametric ($s^u_{ij}=1$) and semi-ultrametric ($s^u_{ij} > 1$), from which graph-level measures of ultrametric backbone size ($\tau^u(D)$) and (semi-ultrametric) redundancy ($\sigma^u(D)$) derive, respectively.
        Measures apply only to the ultrametric backbone that derives from the ultrametric closure instantiated by td-norm $g \equiv \max$.
    }
    \label{fig:measures_ultrametric}
\end{figure*}

\begin{figure*}[ht!]
    \centering
    \includegraphics[width=\textwidth]{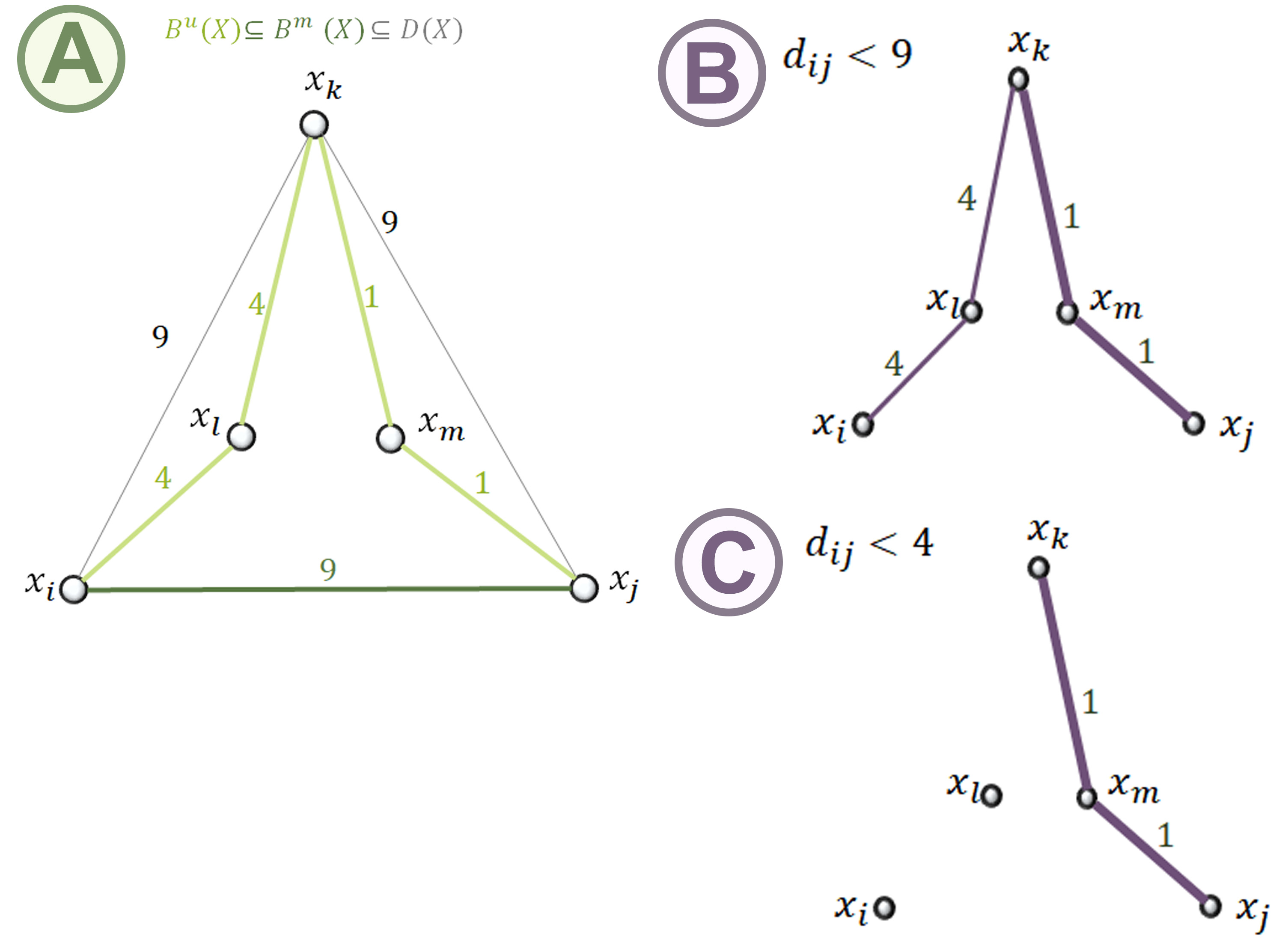}
    \caption{
        \textbf{Metric and ultra-metric backbones vs thresholding and Minimum Spanning Tree.}
        \textbf{A}. Example distance graph of 5 nodes with edge distance weights; edges $d_{ik}=9$ and $d_{jk}=9$ break the triangle inequality; edge $d_{ij}=9$ does not break the triangle inequality; metric backbone edges in green, ultra-metric backbone in lighter green, semi-metric edges in gray.
        \textbf{B}. Threshold graph for $d_{jk} < 9$ which is also the graph's MST (and in this case, also the ultra-metric backbone).
        \textbf{C}. Threshold graph for $d_{jk} < 4$.
        Notice that any threshold $d_{jk} >= 9$ returns the whole original distance graph, and threshold $d_{jk} < 9$ removes metric edge $d_{ij}$, thus affecting the shortest path between $x_i$ and $x_j$, $d^{T,m}_{ij}: 9 \rightarrow 10$; $d_{jk} < 4$ destroys original connectivity and shortest path distribution.
    }
    \label{fig:backbone_thresholds}
\end{figure*}

\end{document}